\newtheorem{obs}[theorem]{Observation}
\newcommand{\mathCommandFont}[1]{\mathrm{#1}}
\newcommand{\complexityClassFont}[1]{\mathrm{#1}}
\newcommand{\problemFont}[1]{\textsc{#1}}
\newcommand{\N}{\ensuremath{\mathbb{N}}}
\newcommand{\mc}[1]{\mathcal{#1}}
\newcommand{\btheorems}[1]{\protect\ensuremath{\mathCommandFont{Th}\kern-2\mu\big(\kern-2\mu#1\kern-2\mu\big)}}
\newcommand{\Btheorems}[1]{\protect\ensuremath{\mathCommandFont{Th}\kern-2\mu\Big(\kern-2\mu#1\kern-2\mu\Big)}}
\newcommand{\class}[1]{\protect\ensuremath\complexityClassFont{#1}}
\newcommand{\co}{\class{co}}
\renewcommand{\L}{\class{L}}
\newcommand{\NL}{\class{NL}}
\renewcommand{\P}{\class{P}}
\newcommand{\NP}{\class{NP}}
\newcommand{\coNP}{\class{coNP}}
\newcommand{\MA}{\class{MA}}
\newcommand{\QCMA}{\class{QCMA}}
\newcommand{\AC}[1]{\class{AC}^{#1}}
\newcommand{\pAC}[1]{\polylog\class{AC}^{#1}}
\newcommand{\NC}[1]{\class{NC}^{#1}}
\newcommand{\SAC}[1]{\class{SAC}^{#1}}
\newcommand{\prob}[1]{\protect\ensuremath\problemFont{#1}}
\newcommand{\TAUT}{\prob{Taut}}
\newcommand{\TTAUT}{\prob{2TAUT}}
\newcommand{\REACH}{\prob{Reach}}
\newcommand{\EXACTOR}{\prob{Exact-Or}}
\newcommand{\EXACTCNT}{\prob{Exact-Count}}
\newcommand{\MAJ}{\prob{Maj}}
\newcommand{\EXMAJ}{\prob{ExMaj}}
\newcommand{\EXMAJEVEN}{\prob{ExMajEven}}
\newcommand{\EQONES}{\prob{EqualOnes}}
\newcommand{\GI}{\prob{GI}}
\newcommand{\REGSCC}{\prob{Reg-SCC}}
\newcommand{\REGNCPS}{\prob{Reg-}\NC{0}\prob{-PS}}
\newcommand{\Th}{\prob{Th}}
\newcommand{\calC}{\mathcal{C}}
\newcommand{\EQUAL}{\prob{Equal}}
\newcommand{\FEAS}[2]{\prob{Feasible}_{#1,#2}}
\newcommand{\CONS}[2]{\prob{Consistent}_{#1,#2}}
\newcommand{\LCONS}[1]{\prob{ConsistentLeaf}_{#1}}
\newcommand{\USTCONN}{\prob{uSTConn}}
\newcommand{\CYCLES}{\prob{Cycles}}
\newcommand{\UNREACH}{\prob{UnReach}}
\newcommand{\poly}{\mathrm{poly}}
\newcommand{\Diag}{\mathrm{Diag}}
\newcommand{\polylog}{\mathrm{poly}\log}
\newcommand{\pACps}{$\polylog\AC{0}$ proof system}
\newcommand{\Up}{\mathrm{UpClose}}
\newcommand{\Minterms}{\mathrm{Minterms}}
\newcommand{\Span}{\mathrm{Span}}
\newenvironment{mylemma}[1]{\rule{0pt}{10pt}\\
  {\bf Lemma~#1.} \it{ }{}}
\newenvironment{myobs}[1]{\rule{0pt}{10pt}\\
  {\bf Observation~#1.} \it{ }{}}
\newenvironment{mytheorem}[1]{\rule{0pt}{10pt}\\
  {\bf Theorem~#1.} \it{ }{}}
\begin{document}

\title{Small Depth Proof Systems}
\author{Andreas Krebs\inst{1} \and
Nutan Limaye\inst{2} \and
Meena Mahajan\inst{3} \and
Karteek Sreenivasaiah\inst{3} 
}

\institute{University of T\"ubingen, Germany
\and
Indian Institute of Technology, Bombay, India
\and
The Institute of Mathematical Sciences, Chennai, India
}

\maketitle

\begin{abstract}
        A proof system for a language $L$ is a function $f$ such that
        Range$(f)$ is exactly $L$.  In this paper, we look at proof
        systems from a circuit complexity point of view and study
        proof systems that are computationally very restricted. The
        restriction we study is: they can be computed by bounded fanin
        circuits of constant depth ($\NC{0}$), or of $O(\log \log n)$
        depth but with $O(1)$ alternations ($\poly\log\AC{0}$). Each
        output bit depends on very few input bits; thus such proof
        systems correspond to a kind of local error-correction on a
        theorem-proof pair.

        We identify exactly how much power we need for proof systems
        to capture all regular languages.  We show that all regular
        language have $\poly\log\AC{0}$ proof systems, and from a
        previous result (Beyersdorff et al, MFCS 2011, where $\NC{0}$
        proof systems were first introduced), this is tight. Our
        technique also shows that $\MAJ$ has $\poly\log\AC{0}$ proof
        system.

        We explore the question of whether $\TAUT$ has $\NC{0}$ proof
        systems. Addressing this question about $\TTAUT$, and since
        $\TTAUT$ is closely related to reachability in graphs, we ask
        the same question about Reachability.  We show that both
        Undirected Reachability and Directed UnReachability have
        $\NC{0}$ proof systems, but Directed Reachability is still
        open.

        In the context of how much power is needed for proof systems
        for languages in $\NP$, we observe that proof systems for a
        good fraction of languages in $\NP$ do not need the full power
        of $\AC{0}$; they have $\SAC{0}$ or $\co\SAC{0}$ proof
        systems.
\end{abstract}

\section{Introduction}
\label{sec:intro}

Let $f$ be any computable function mapping strings to strings. Then
$f$ can be thought of as a proof system for the language
$L=\mbox{range}(f)$ in the following sense: to prove that a word $x$
belongs to $L$, provide a word $y$ that $f$ maps to $x$.  That is, 
view $y$ as a proof of the statement ``$x \in L$'', and
computing $f(y)$ is then tantamount to verifying the proof. From the
perspective of computational complexity, interesting proof systems are
those functions that are efficiently computable and have succinct
proofs for all words in their range. If we use polynomial-time
computable as the notion of efficiency, and polynomial-size as the
notion of succinctness, then $\NP$ is exactly the class of languages
that have efficient proof systems with succinct proofs. For
  instance, the $\coNP$-complete language $\TAUT$ has such proof
  systems if and only if $\NP$ equals $\coNP$
\cite{Cook71}.

Since we do not yet know whether or not $\NP$ equals
$\co$-$\NP$, a reasonable question to ask is how much more
computational power and/or non-succinctness is needed before we can
show that $\TAUT$ has a proof system. For instance, allowing the
verifier the power of randomized polynomial-time computation on
polynomial-sized proofs characterizes the class $\MA$; allowing
quantum power characterizes the class $\QCMA$; one could also allow
the verifier access to some advice, yielding non-uniform classes; see
for instance \cite{Hir10,HI10,CK07,Pudlak09}.

An even more interesting, and equally reasonable, approach is to ask:
how much do we need to reduce the computational power of the verifier
before we can formally establish that $\TAUT$ does not have a proof
system within those bounds?  This approach has seen a rich body of
results, starting from the pathbreaking work of Cook and Reckhow
\cite{CR79}. The common theme in limiting the verifier's power is to
limit the nature of proof verification, equivalently, the syntax of
the proof; for example, proof systems based on resolution, Frege
systems, and so on.  See \cite{BP01,Seg07} for excellent surveys on
the topic.

Instead of restricting the proof syntax, if we only restrict the
computational power of the verifier, it is not immediately obvious
that we get anywhere. This is because it is already known that $\NP$
is characterised by succinct proof systems with extremely weak
verifiers, namely $\AC{0}$ verifiers. Recall that in $\AC{0}$ we
cannot even check if a binary string has an odd number of 1s
\cite{FSS84,Has86}. But an $\AC{0}$ computation can verify that a
given assignment satisfies a Boolean formula. Nonetheless, one can
look for verifiers even weaker than $\AC{0}$; this kind of study was
initiated in \cite{BDKMSSTV13} where $\NC{0}$ proof systems were
investigated. In an $\NC{0}$ proof system, each output bit depends on
just $O(1)$ bits of the input, so to enumerate $L$ as the range of an
$\NC{0}$ function $f$, $f$ must be able to do highly local corrections
to the alleged proof while maintaining the global property that the
output word belongs to $L$. Unlike with locally-decodable
error-correcting codes, the correction here must be deterministic and
always correct. This becomes so restrictive that even some
very simple languages, that are regular and in $\AC{0}$, do not have
such proof systems, even allowing non-uniformity. And yet there is an
$\NP$-complete language that has a uniform $\NC{0}$ proof
system (See \cite{CM01}). (This should not really be that surprising, because it is
known that in $\NC{0}$ we can compute various cryptographic
primitives.)  So the class of
languages with $\NC{0}$ proof systems slices vertically across
complexity classes. It is still not known whether $\TAUT$ has a
(possibly non-uniform) $\NC{0}$ proof
system. Figure~\ref{fig:containment} shows the relationships between
classes of languages with proof systems of the specified kind. (Solid
arrows denote proper inclusion, dotted lines denotes
incomparability.) 
  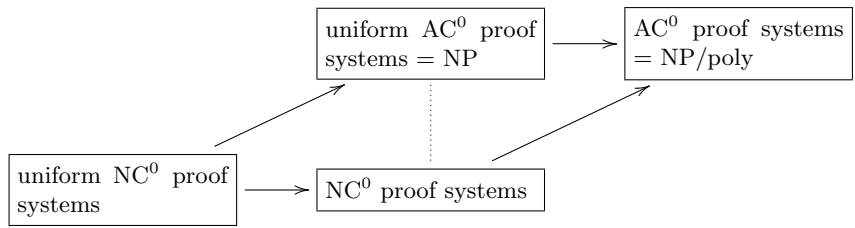
\begin{figure}[ht]
  \label{fig:containment}
  \begin{center}
  \xymatrix{ 
  &
  \fbox{\parbox{28mm}{uniform $\AC{0}$ proof systems = $\NP$}}
  \ar[r] 
  & 
  \fbox{\parbox{28mm}{$\AC{0}$ proof systems = $\NP$/poly}}
  \\
  \fbox{\parbox{28mm}{uniform $\NC{0}$ proof systems}}
  \ar[r] \ar[ru] 
  & 
  \fbox{\parbox{28mm}{$\NC{0}$ proof systems}} \ar@{..}[u]
  \ar[ru] 
  }
  \end{center}
  \caption{Some constant-depth proof systems}
  \end{figure}

The work in \cite{BDKMSSTV13} shows that languages of varying
complexity (complete for $\NC{1}$, $\P$, $\NP$) have uniform $\NC{0}$
proof systems, while the languages $\EXACTOR$, $\MAJ$ amongst others
do not have even non-uniform $\NC{0}$ proof systems. It then focuses
on regular languages, and shows that a large subclass of regular
languages has uniform $\NC{0}$ proof systems. This work takes off from
that point.

\subsection*{Our Results}
We address the question of exactly how much computational power is
required to capture all regular languages via proof systems, and
answer this question exactly. One of our main results (Theorem~\ref{thm:regularPS}) is that
 every regular
language has a proof system computable by a circuit with bounded fanin
gates, depth $O(\log \log n)$, and $O(1)$ alternations. Equivalently,
the proof system is computable by an $\AC{0}$ circuit where each gate
has fanin $(\log n)^{O(1)}$; we refer to the class of such circuits as 
$\pAC{0}$ circuits. By the result of \cite{BDKMSSTV13}, 
$\EXACTOR$ requires depth $\Omega(\log \log n)$, so (upto constant
multiplicative factors) this is tight.  Our proof technique also
generalises to show that $\MAJ$ has $\pAC{0}$ proof systems (Theorem~\ref{thm:thresholdPS}).

The most intriguing question here, posed in \cite{BDKMSSTV13}, is to
characterize the regular languages that have $\NC{0}$ proof
systems. We state a conjecture for this characterization; the
conjecture throws up more questions regarding decidability of some
properties of regular languages.

We believe that $\TAUT$ does not have $\AC{0}$ proof systems because 
otherwise $\NP = \coNP$ (See \cite{Cook71}). As a weaker step, can we
at least
prove that it does not have $\NC{0}$ proof systems? Although it seems that this
should be possible,  we have not yet succeeded. So we ask the same
question about $\TTAUT$, which is in $\NL$, and hence may well have an $\NC{0}$ proof system.
The standard $\NL$ algorithm for $\TTAUT$ is via a
reduction to $\REACH$. So it is interesting to ask -- does $\REACH$ have an
$\NC{0}$ proof system? We do not know yet. 
However in our other main result, we show that undirected $\REACH$, a
language complete for $\L$, has an $\NC{0}$ proof system
(Theorem~\ref{thm:ustconn}). Our construction
relies on a careful decomposition of 
even-degrees-only graphs (established in the proof of Theorem~\ref{thm:cycles}) 
that may be of independent interest. We also show that directed unreachability has an $\NC{0}$
proof system (Theorem~\ref{thm:stunreach}).

Finally, we observe that Graph Isomorphism
does not have $\NC{0}$ proof systems. We also note that for every language $L$
in $\NP$, the language $(\{1\} \cdot L \cdot \{0\}) \cup 0^* \cup 1^*$ has
both $\SAC{0}$ and $\co\SAC{0}$ proof systems (Theorem~\ref{thm:SAC}).

\section{Preliminaries}
\label{sec:prelim}

Unless otherwise stated, we consider only bounded fanin circuits over
$\vee,\wedge,\neg$. 

\begin{definition}[\cite{BDKMSSTV13}]
\label{def:ps}
A circuit family $\{C_n\}_{n>0}$ is a proof system for a language $L$
if there is a function $m:\N \longrightarrow \N$ such that for each
$n$ where $L^{=n} \neq \emptyset$, 
\begin{enumerate}
\item $C_n$ has $m(n)$ inputs and $n$ outputs, 
\item for each $y\in L^{=n}$, there is an $x\in \{0,1\}^{m(n)}$ such
  that $C_n(x)=y$ {\em (completeness)}, 
\item for each $x\in \{0,1\}^{m(n)}$, $C_n(x) \in L^{=n}$ {\em (soundness)}.
\end{enumerate}
\end{definition}

Note that the parameter $n$ for $C_n$ is the number of output bits,
not input bits.
$\NC{0}$ proof systems are proof systems as above where the circuit
has $O(1)$ depth. The definition implies that the circuits are of
linear size.  $\AC{0}$ proof systems are proof systems as above where
the circuit $C_n$ has $O(\log n )$ depth but $O(1)$ alternations
between gate types.
Equivalently, they are proof systems as above of
$n^{O(1)}$ size with unbounded fanin gates and depth $O(1)$.

\begin{proposition}[\cite{BDKMSSTV13}]
\label{prop:old-upper}
A regular language $L$ satisfying any of the following has an $\NC{0}$
proof system:
\begin{enumerate}
\item
$L$ has a  strict star-free expression (built from $\epsilon$, 
$a$, 
and $\Sigma^*$, using concatenation and union).
\item
$L$ is accepted by an automaton with a universally reachable
absorbing final state.
\item
$L$ is accepted by a strongly connected automaton.
\end{enumerate}
\end{proposition}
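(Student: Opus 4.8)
The plan is to construct the $\NC{0}$ proof system $C_n$ explicitly in each of the three cases, using two reusable gadgets. The first is a \emph{differential encoding of a run}: read the proof as a sequence of cells, cell $i$ carrying a ``state snapshot'' $s_i$ (with $s_0=q_0$ hard-wired) and a candidate letter $a_i$; output bit $i$ emits $a_i$ when $\delta(s_{i-1},a_i)=s_i$, and otherwise emits a hard-wired canonical letter realizing the transition $s_{i-1}\to s_i$. Each output bit then depends only on $s_{i-1},s_i,a_i$, so $C_n\in\NC{0}$, and -- whatever the proof is -- the word produced has an honest run of the automaton, namely $s_0,\dots,s_n$; thus soundness reduces to forcing $s_n$ to be accepting, while completeness is immediate by feeding in the run of the target word. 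The second gadget is a \emph{robust token chain}: a bit $t_i$ per cell, with $t_0$ and $t_n$ hard-wired to opposite values, used to mark where some ``event'' happens (the run reaching a designated state, a designated infix appearing). Robustness is free: a chain with $t_0\neq t_n$ must contain a $1\!\to\!0$ drop, which is enough to force the event at least once no matter how the rest of the chain behaves.

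For condition~3 I would first handle \emph{primitive} automata, where for every pair of states $p,q$ some single letter takes $p$ to $q$. There the differential encoding works directly, and acceptance is ensured by reserving the last $O(1)$ cells as a ``funnel'': unless the proof's own final transitions are already valid and land in $F$, those cells are overwritten by a canonical length-$O(1)$ word steering $s_{n-O(1)}$ into $F$, which exists because a primitive automaton reaches every state -- in particular every final state -- in exactly $O(1)$ steps. A general strongly connected automaton is reduced to the primitive case by passing to a fixed constant power $A^k$ (one super-step reading $k$ letters): for $k$ large enough, strong connectivity makes $A^k$ primitive in the aperiodic case, and in the periodic case one works inside the cyclic classes of the automaton, with constant-length overhead at the two ends and a small separate block for the $n\bmod k$ leftover letters. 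I expect this period-and-remainder bookkeeping, though entirely finite, to be the fiddliest part of this case.

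For condition~2 the absorbing, universally reachable final state $q_f$ supplies the slack we need: from any state there is a canonical word of length $\le|Q|$ to $q_f$, and once the run genuinely reaches $q_f$ every continuation stays accepting, so distant output bits need not coordinate after that. I would run the differential encoding together with a token chain marking the time the run is declared to be at $q_f$: before the drop the circuit tracks the run toward $q_f$, falling back on any mismatch onto the canonical path (whose bounded length makes ``am I on a fallback path?'' decidable from a constant window, and which is harmless once $q_f$ has really been reached because $q_f$ absorbs everything); after the drop, letters are free; hard-wiring the far end of the chain forces the drop and hence soundness. The words accepted by runs that never visit $q_f$ (possible if the automaton has final states besides $q_f$) need a little extra care, via the strictly smaller sub-automaton obtained by deleting $q_f$.

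For condition~1 I would put the strict star-free expression into the normal form of a finite union of \emph{patterns} $u_0\Sigma^*u_1\Sigma^*\cdots\Sigma^*u_m$ with the $u_i$ fixed words. If some pattern is $\Sigma^*$ then $L=\Sigma^*$ and $C_n$ is the identity. If some pattern is \emph{unanchored} (begins and ends with $\Sigma^*$) but is not $\Sigma^*$ itself, then $L$ contains $\Sigma^*w\Sigma^*$ for a fixed non-empty $w$, so the minimal automaton of $L$ has an absorbing, universally reachable, final state -- the class ``$w$ has been read'' -- and condition~2 applies. Otherwise every pattern is anchored by a non-empty fixed prefix or suffix (or has no $\Sigma^*$ at all); since those affixes have bounded length, I would let the proof carry an $O(1)$-size ``boundary block'' that fixes the first and last $O(1)$ output symbols and simultaneously selects which pattern's ``middle condition'' the rest of the proof must meet -- a choice among boundedly many options, i.e.\ $O(1)$ bits that \emph{every} output bit is allowed to read. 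Each middle condition is again of the ``contains these infixes, in order'' type, handled by the token-chain gadget; a union of such is a ``contains $w_1$, or $w_2$, \dots'' condition, which once more falls under condition~2. The recurring obstacle is exactly this: every step that looks as if it needs a global ``which case are we in'' bit -- the union in condition~1, the acceptance test in condition~3 -- has to be re-routed either into a constant-size block that all output bits may legitimately read, or into an invariant that holds automatically (the telescoping run $s_0\to\cdots\to s_n$, or coasting in the absorbing state), and verifying that these re-routings cover every case is where the real work lies.
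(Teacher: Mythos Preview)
The paper does not prove this proposition; it is cited from \cite{BDKMSSTV13} without argument, so there is no in-paper proof to compare against. The paper does, however, record one fact directly relevant to your plan: Observation~\ref{obs:absorb-scc} shows that any automaton satisfying condition~2 can be turned into a strongly connected one for the same language by adding $\epsilon$-moves from $q_f$ to every state. That observation is precisely the clean route for condition~2, and it closes the one real gap in your proposal.

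Your direct attack on condition~2 does not go through as written. You correctly note that words accepted only by runs that avoid $q_f$ are not produced by the token-chain construction, and you propose covering them via the sub-automaton $M'$ obtained by deleting $q_f$. But $M'$ is an arbitrary NFA with one fewer state: it need not be strongly connected, need not have its own absorbing universally reachable final state, and need not fall under any hypothesis you can recurse on. ``Strictly smaller'' is not by itself an inductive hypothesis here, since not every regular language has an $\NC{0}$ proof system. The fix is to drop the direct argument and reduce condition~2 to condition~3 via the $\epsilon$-move trick above; then your construction for condition~3 handles everything.

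Your plan for condition~3 --- pass to $A^k$ so that every ordered pair of states is joined by some super-letter, run the differential encoding with $s_0=q_0$ hard-wired, and force acceptance with a constant-length funnel at the right end --- is sound, and the period/remainder bookkeeping you flag is genuinely just finite case analysis. Condition~1 then also goes through, and more simply than your case split suggests: each normalised pattern $u_0\Sigma^*u_1\cdots\Sigma^*u_m$ is a finite word $u_0$, a finite word $u_m$, and in between a language $\Sigma^*u_1\cdots u_{m-1}\Sigma^*$ that already falls under condition~2; one then finishes with closure under finite union and concatenation with finite words (both also quoted from \cite{BDKMSSTV13} in the paper). Your anchored/unanchored dichotomy is doing this same work more laboriously.
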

\begin{proposition}[\cite{BDKMSSTV13}]
\label{prop:old-lower}
\begin{enumerate}
\item
Proof systems for $\MAJ$ need $\omega(1)$ depth. 
\item
Proof systems for $\EXACTCNT^n_k$ and $\neg \Th^n_{k+1}$ need
$\Omega(\log (\log n - \log k))$ depth. In particular, proof systems
for $\EXACTOR$ and for $\EXACTOR\cup 0^*$ need $\Omega(\log \log n)$
depth.
\end{enumerate}
\end{proposition}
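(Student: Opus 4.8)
The common tool is locality: in a bounded-fanin circuit of depth $d$, every output bit depends on at most $t := 2^d$ input bits, so for a depth-$d$ proof system $C_n$ we may view each output bit $y_i := C_n(\cdot)_i$ as a function of a set $S_i$ of at most $t$ of the input bits. I would obtain the quantitative bound (part~2) by a probabilistic argument. Put the uniform distribution on inputs $x \in \{0,1\}^{m(n)}$, set $Z(x) := \sum_{i=1}^n C_n(x)_i$ (the number of ones in the output), and let $p_i := \Pr_x[C_n(x)_i = 1]$. Since $y_i$ depends on at most $t$ coordinates, $p_i$ is an integer multiple of $2^{-t}$, so $p_i > 0$ forces $p_i \ge 2^{-t}$. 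Completeness supplies, for every position $i \in [n]$, a length-$n$ word of the language with a $1$ in coordinate $i$ (immediate for the $\EXACTOR$ variants, and needing only $1 \le k \le n$ for $\EXACTCNT^n_k$ and for $\neg\Th^n_{k+1}$), hence $p_i \ge 2^{-t}$ for all $i$. Soundness bounds the output weight: $Z(x) \le k$ always (with $k = 1$ for the $\EXACTOR$ variants), so $\sum_{i=1}^n p_i = \mathbb{E}_x[Z(x)] \le k$. Combining, $n\,2^{-t} \le k$, i.e. $2^{t} = 2^{2^d} \ge n/k$, which gives $d \ge \log_2(\log_2 n - \log_2 k)$; taking $k = 1$ yields the $\Omega(\log\log n)$ bound for $\EXACTOR$ and for $\EXACTOR \cup 0^*$.

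For $\MAJ$ this argument collapses: soundness only gives $Z(x) \ge n/2$, which is comparable to $n$ and produces no contradiction. Instead I would do local surgery on the proof. First observe that for $n \ge 2$ no output bit of a proof system for $\MAJ^n$ can be constant: $y_i \equiv 1$ would keep the length-$n$ word of weight $n-1$ with its unique $0$ in position $i$ out of the range, and $y_i \equiv 0$ would keep $1^n$ out. So each $y_i$ is a non-constant function of at most $t$ inputs and admits a $0$-certificate: a partial assignment to at most $t$ input coordinates that forces $y_i = 0$. If one can select, for some $Q \subseteq [n]$ with $|Q| > n/2$, $0$-certificates that are pairwise consistent, then their common refinement forces $y_j = 0$ for all $j \in Q$, and every completion of it is a legal input yet yields an output of weight less than $n/2$, which is therefore not in $\MAJ^n$, contradicting soundness. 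So it suffices to exhibit such a consistent family of size $> n/2$ whenever $n$ is large relative to $t$.

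The plan for that step is a conflict-graph argument: choose the $0$-certificates greedily, form the graph on $[n]$ that joins $i$ and $j$ when their chosen certificates disagree on a shared coordinate, bound its density using $\sum_i |S_i| \le nt$ --- an input bit feeding many outputs is handled by noting that outputs synchronised on a common literal can be forced together --- and extract an independent set of size $> n/2$; since $t$ is constant while $n \to \infty$ this succeeds, so no fixed depth works for all $n$ and the required depth is $\omega(1)$. This last step is the main obstacle. When the proof system is rigid --- already for $t = 1$, where each $y_i$ is a literal and the $0$-certificates of two outputs reading a common input with opposite polarity necessarily conflict --- one can be pinned at exactly $n/2$ simultaneously-zeroable outputs, and soundness alone then gives nothing; there one has to invoke that the range must be \emph{all} of $\MAJ^n$ (for $t = 1$ the range is confined to words of weight exactly $n/2$, which excludes $1^n$). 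Handling these rigid configurations uniformly in $t$ is the delicate point, and it is also why this route gives only $\omega(1)$ rather than a bound matching the $O(\log\log n)$ upper bound of Theorem~\ref{thm:thresholdPS}.
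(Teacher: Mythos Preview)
First, note that the paper does not prove this proposition: it is quoted from \cite{BDKMSSTV13} and stated without proof. So there is no in-paper argument to compare against; I can only assess your proposal on its own merits.

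Your argument for part~2 is correct and essentially the standard one. Granularity of $p_i$ plus the weight cap $Z(x)\le k$ from soundness gives $n\,2^{-2^d}\le k$, hence $d\ge\log_2(\log_2 n-\log_2 k)$; the instantiation with $k=1$ yields the $\EXACTOR$ bounds. Nothing to add.

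Your argument for part~1 has a genuine gap, and you say so yourself (``this last step is the main obstacle''). The conflict-graph step is not justified: knowing $\sum_i|S_i|\le nt$ bounds input--output incidences, not the number of conflicting pairs of $0$-certificates. A single input bit read by $r$ outputs can create $\Theta(r^2)$ conflict edges, so the conflict graph can be dense and need not have an independent set of size exceeding $n/2$. Your proposed fix (``outputs synchronised on a common literal can be forced together'') is a heuristic, not an argument: an output may read several high-degree inputs whose $0$-certificates pull in different directions, and you give no mechanism to resolve this. Also, your diagnosis of the $t=1$ case is slightly off. There soundness alone already fails: writing $a_j,b_j$ for the number of outputs equal to $x_j$ and $\overline{x_j}$ respectively, the minimum output weight over all inputs is $\sum_j\min(a_j,b_j)\le\tfrac12\sum_j(a_j+b_j)=n/2$, so some output lies outside $\MAJ$. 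So the $t=1$ ``rigid'' configuration is not a counterexample to a soundness-only approach; the real difficulty is that for general $t$ you have not shown how to simultaneously zero out more than $n/2$ outputs, and the sketch as written does not do this.
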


\section{Proof Systems for Regular Languages}
\label{sec:regular}

We first explore the extent to which the structure of regular languages can
be used to construct $\NC{0}$ proof systems. At the base level, 
we know that 
all finite languages have $\NC{0}$ proof systems. Building
regular expressions involves unions, concatenation, and Kleene
closure. And the resulting class of regular languages is also closed
under many more operations. We examine these operations one by one.

\begin{theorem}
\label{thm:closures}
\samepage{
Let $\calC$ denote the class of languages  with $\NC{0}$ proof
systems.  Then $\calC$ is closed under 
\begin{multicols}{2}
\begin{enumerate}
\item finite union \cite{BDKMSSTV13},
\item concatenation with finite sets \cite{BDKMSSTV13}, 
\item reversal, 
\item fixed-length morphisms,
\columnbreak
\item inverses of fixed-length morphisms,
\item fixed-length regular transductions computed by strongly
  connected (nondeterministic) finite-state automata.
\end{enumerate}
\end{multicols}
}
\end{theorem}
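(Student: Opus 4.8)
The plan is to prove each closure property separately, reusing the two closure properties (1) and (2) cited from \cite{BDKMSSTV13} as black boxes and building the rest from explicit constructions of the verifier circuit. For each operation, the task is: given $\NC{0}$ proof systems $\{C_n\}$ (and possibly a second family $\{C'_n\}$) for the input language(s), construct an $\NC{0}$ proof system $\{D_n\}$ for the output language. The output length $n$ is the number of output bits, so I have to be careful that the number of output bits of $D_n$ is correctly matched to a length class of the target language, and that the input (proof) length function $m(n)$ is well-defined.

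\textbf{Reversal.} This is the easiest case and I would do it first as a warm-up. If $C_n$ computes a proof system for $L^{=n}$, then the circuit $D_n$ that simply permutes the output wires of $C_n$ by reversal ($i \mapsto n+1-i$) is a proof system for $(L^{\mathrm{R}})^{=n}$: completeness and soundness transfer immediately since $y \in L^{=n} \iff y^{\mathrm{R}} \in (L^{\mathrm{R}})^{=n}$, and permuting output wires preserves depth and bounded fanin. No new proof bits are needed, so $m'(n) = m(n)$.

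\textbf{Fixed-length morphisms.} Let $h : \Sigma^* \to \Gamma^*$ be a morphism with $|h(a)| = k$ for every letter $a$. After fixing binary encodings of $\Sigma$ and $\Gamma$ (constant-size alphabets, so a constant number of bits per letter), a word of $m$ letters over $\Sigma$ maps to a word of $km$ letters over $\Gamma$. The circuit $D$ takes the proof $x$ for a length-$m$ word $w \in L$, runs $C$ to produce (the encoding of) $w$, and then replaces each output letter-block by the corresponding block of $h$; since $h$ is a fixed finite function on a constant-size alphabet, each output bit of the $h$-stage depends on $O(1)$ bits (one letter-block) of $w$, hence on $O(1)$ input bits overall. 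Depth stays $O(1)$. For \textbf{inverses of fixed-length morphisms}, $h^{-1}(L) = \{ w : h(w) \in L\}$; a length-$m$ word $w$ over $\Sigma$ has $h(w)$ of length $km$, so $D_m$ should take a proof that works for $C_{km}$, check locally that the produced word of length $km$ actually lies in $h(\Sigma^m)$ (each $k$-block must be in the finite set $h(\Sigma)$ — a local $O(1)$-depth test, but soundness requires that \emph{every} input be mapped into the range, so I cannot just "reject"), and output the $\Sigma$-preimage block by block. The subtlety here — and I expect this to be the main obstacle throughout — is that an $\NC{0}$ proof system has no "reject": on \emph{every} input it must output a word in the target language. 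So when the produced $C_{km}$-output is a word in $L$ that is not in the image of $h$, I must still output \emph{something} in $h^{-1}(L)$. The fix is the standard trick used already in \cite{BDKMSSTV13}: use the proof format to always first "repair" the word into the image of $h$ by local correction (or build the verifier so that the bits fed to $C_{km}$ are forced into image-of-$h$ form before $C_{km}$ even sees them), so that the composition is well-defined on all inputs; getting this local-repair gadget right, so that it is simultaneously sound, complete, and $O(1)$-depth, is the delicate part.

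\textbf{Fixed-length regular transductions computed by strongly connected NFAs.} This is the heaviest item and I would do it last. A fixed-length transduction is realised by an NFA-transducer that reads one input letter and writes exactly $k$ output letters per step. The target language is $T(L) = \{ v : \exists\, w \in L,\ (w,v) \text{ is an accepting run of the transducer}\}$. The plan is to combine the idea behind Proposition~\ref{prop:old-upper}(3) (strongly connected automata have $\NC{0}$ proof systems, via the fact that any sequence of states can be "patched" into a valid run by inserting short connecting paths — but here lengths are fixed, so the patching must be length-preserving, which is exactly where strong connectivity with controllable cycle lengths is used) with the fixed-length-morphism construction: the proof encodes, for each input position, a transition of the transducer (source state, letter, $k$ output letters, target state), and $D$ locally checks and corrects the "source = previous target" consistency condition and the endpoint conditions (start state, accepting state). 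Local correction of a list of transitions into a genuine run, in $O(1)$ depth, is possible precisely because the automaton is strongly connected: a mismatch at position $i$ can be absorbed by rerouting through a bounded-length detour, and strong connectivity guarantees such a detour exists — but making the detour \emph{length-exact} (so the output length stays $kn$) and purely \emph{local} (so each output bit still depends on $O(1)$ proof bits) is the crux. I would first record a clean combinatorial lemma: in a strongly connected NFA, for every pair of states $p,q$ there is a walk from $p$ to $q$ whose length lies in a fixed residue class modulo some period $d$, and by padding with loops we can hit all sufficiently large lengths in that class; this lets a local gadget fix a single inconsistency by consuming a bounded window of positions. Assembling these gadgets so their windows do not overlap (a standard blocking/interleaving argument) then yields the $\NC{0}$ proof system, and since a fixed-length morphism is the special case of a single-state transducer, item (4) is subsumed — though I would still present (4) separately for clarity.

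The main obstacle, then, is uniform across all the nontrivial items: an $\NC{0}$ proof system must be \emph{total} onto the target language, so every closure construction has to come with an explicit, constant-depth, local \emph{repair} procedure that turns an arbitrary proof string into a valid one, and I must verify in each case that this repair is genuinely $O(1)$-depth, genuinely local, and length-preserving. I would structure the write-up so this repair idea is factored out once (as a lemma about strongly connected automata and about local block-correction) and then invoked for items (4), (5), and (6), with (3) handled trivially and (1), (2) cited.
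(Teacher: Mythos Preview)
For items (3) and (4) your approach matches the paper's exactly. For item (5), the paper's proof is a single line---apply $h^{-1}$ block-wise to the output of $D'_{kn}$, with extra input bits per block to select among pre-images---and does not discuss the non-image-block issue you raise; your instinct to flag it is reasonable, but note that your proposed ``local repair'' does not actually resolve it either, since correcting one block of the $L$-proof-system's output can take the whole word out of $L$, and ``forcing the bits fed to $C_{km}$'' controls the wrong end of the circuit.

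For item (6) you are overengineering, and there is a gap in the plan. The paper's argument is essentially two lines: feed the output $w \in L$ of $C_n$ as the candidate word into the Proposition~\ref{prop:old-upper}(3) construction for the strongly connected transducer $M$, then output the second (transduction) component of each edge label in the repaired run rather than the first. That black-box construction already repairs a claimed run position by position in $\NC{0}$, without inserting or deleting steps, so your machinery of length-exact detours, residues modulo a period, and non-overlapping gadget windows is solving a harder problem than is posed. More importantly, although you say you will ``combine'' with the morphism construction, your actual description of the proof format---``for each input position, a transition of the transducer''---never pins down how the sequence of input letters is forced to spell out a word of $L$; as literally written it yields a proof system for the full range $h(\Sigma^*)$ rather than for $h(L)$. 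The missing ingredient is precisely to let those input letters be the output bits of $C_n$, which is the entire content of the paper's construction.
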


\begin{proof}
Closure under reversal is trivial. 

Let $h$ be a fixed-length morphism $h:\{0,1\} \longrightarrow
\{0,1\}^k$ for some fixed $k$. Given a proof system $(C_n)$ for $L$, a
proof system $(D_n)$ for $h(L)$ consists of $n$ parallel applications
of $h$ to the each bit of the output of the circuit $C_n$.  Given a
proof system $D'_n$ for $L$, a proof system $C'_n$ for $h^{-1}(L)$
consists of $n$ parallel applications of $h^{-1}$ applied to disjoint
$k$-length blocks of the output of the circuit $D'_{kn}$. $C'_n$ needs
additional input for each block to choose  between possibly
multiple pre-images. 

If $L$ has an $\NC{0}$ proof system $(C_n)$ and $h$ is a regular
transduction computed by a strongly connected automaton
$M$, the construction from \cite{BDKMSSTV13}
(Proposition~\ref{prop:old-upper} (3)) with the output $w$ of $C_n$ as
input will produce a word $x\in L(M)$. A small modification allows us
output the transduction $h(x)$ instead of $x$. This works provided
there are constants $k,\ell$ such that each edge in $M$ is labeled by
a pair $(a,b)$ with $a\in \{0,1\}^k$ and $b\in \{0,1\}^\ell$.
\qed\end{proof}

\begin{theorem}
\label{thm:nonclosures}
Let $\calC$ denote the class of languages  with $\NC{0}$ proof
systems.   $\calC$ is not closed under 
\begin{multicols}{2}
\begin{enumerate}
\item complementation \cite{BDKMSSTV13},
\item concatenation, 
\item symmetric difference, 
\item cyclic shifts, 
\item permutations and shuffles,
\item intersection,
\item quotients.
\end{enumerate}
\end{multicols}
\end{theorem}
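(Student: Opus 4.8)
The plan is to establish non-closure by exhibiting, for each operation, concrete languages in $\calC$ whose combination lies outside $\calC$. The engine driving every case is Proposition~\ref{prop:old-lower}: any language from which one can "read off" a hard counting instance — such as $\EXACTOR$ or $\EXACTOR \cup 0^*$ — cannot have an $\NC{0}$ proof system. So the strategy throughout is to build, from simple building blocks already known to be in $\calC$ (finite languages, strongly connected automata, strict star-free languages via Proposition~\ref{prop:old-upper}), a language that, under the given operation, forces a sublanguage isomorphic (via a fixed-length morphism, using Theorem~\ref{thm:closures}) to $\EXACTOR$ or a similar hard counting language. Since complementation is already handled in \cite{BDKMSSTV13}, I would reuse its witness: a language $L$ with $\NC{0}$ proof system whose complement is essentially $\EXACTOR\cup 0^*$-hard.

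Concretely, I would proceed operation by operation. For \emph{concatenation}: take two strongly-connected-automaton languages (so both in $\calC$) whose concatenation has an absorbing "boundary" forcing the number of $1$s in one part to equal a fixed count — e.g. arrange that $L_1 \cdot L_2$ restricted to a suitable regular pattern equals $\EXACTOR$ up to a fixed-length morphism. For \emph{symmetric difference} and \emph{intersection}: since $L \cap L' = \overline{\overline L \cup \overline L'}$ and $L \triangle L' = (L \setminus L') \cup (L' \setminus L)$, and $\calC$ \emph{is} closed under finite union but not complement, one expects these to fail; I would take $L, L'$ with $\NC{0}$ proof systems whose intersection (resp. symmetric difference) is exactly a hard language like $\EXACTOR \cup 0^*$. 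The cleanest construction: let $L = \{0,1\}^*$-type padded versions so that $L \cap L'$ isolates strings with exactly one $1$. For \emph{cyclic shifts}, \emph{permutations and shuffles}: start with a language like $10^{n-1}$ (trivially in $\calC$, it is even finite per length) whose closure under cyclic shift / permutation is $\{w : w \text{ has exactly one } 1\} = \EXACTOR$, which by Proposition~\ref{prop:old-lower}(2) has no $\NC{0}$ proof system. For \emph{quotients}: take a language $L \in \calC$ and a single string $u$ such that $u^{-1}L$ (or $L u^{-1}$) equals $\EXACTOR$ or $\EXACTOR \cup 0^*$; e.g. $L = 1 \cdot \EXACTOR$-analogue built as a strongly connected automaton language, with $1^{-1}L$ collapsing to the hard part.

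The main obstacle is the \emph{concatenation} case, and more generally making sure each "building block" genuinely has an $\NC{0}$ proof system while the combined language is genuinely hard. It is easy to write down a hard target language, but one must certify the operands lie in $\calC$ — which means fitting them into one of the three sufficient conditions of Proposition~\ref{prop:old-upper} (strict star-free, universally-reachable absorbing final state, or strongly connected automaton). For concatenation the delicate point is that the "junction" between the two factors must not itself leak counting information in a way that would (incorrectly) give the concatenation an $\NC{0}$ proof system; one needs the factor languages to be permissive enough to be in $\calC$ yet rigid enough that their product pins down an exact count. I expect the resolution is to use unary-alphabet-flavored constructions like $L_1 = \{w : w$ ends in $0$ or is all $1$s$\}$-style strongly connected languages, and to invoke the lower bound for $\EXACTOR \cup 0^*$ (not just $\EXACTOR$) precisely because the extra $0^*$ absorbs the junction slack. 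Once the concatenation witness is in hand, symmetric difference, intersection, and quotients follow by small variations on the same pair of languages together with the De~Morgan-style identities above, and cyclic shifts / permutations / shuffles follow from the single observation that permuting a single $1$ through a block of $0$s produces exactly $\EXACTOR$.
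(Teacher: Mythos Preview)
Your overall strategy --- build $\EXACTOR$ (or $\EXACTOR\cup 0^*$) from simple $\calC$-languages and invoke Proposition~\ref{prop:old-lower} --- is exactly the paper's, and your treatment of cyclic shifts, permutations and shuffles (starting from $10^*$ or the per-length language $10^{n-1}$) matches the paper verbatim.

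However, several of the other cases are either needlessly complicated or genuinely broken. For \emph{concatenation} you are working much too hard: the one-line witness is $\EXACTOR = 0^* \cdot (10^*)$, where both factors are in $\calC$ trivially (e.g.\ via Proposition~\ref{prop:old-upper}). There is no ``junction slack'' issue to worry about. For \emph{symmetric difference} the clean witness is $\EXACTOR = \Th_2 \,\Delta\, \Th_1$, with both $\Th_1,\Th_2\in\calC$; your De~Morgan heuristic is only motivation, not a construction. For \emph{intersection} the paper passes to a four-letter alphabet so that each operand can be made to satisfy Proposition~\ref{prop:old-upper}(2) while their intersection collapses to $\EXACTOR$; your sketch over $\{0,1\}$ does not supply such operands.

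The real gap is \emph{quotients}. Your plan is to take $L$ accepted by a strongly connected automaton and a word $u$ with $u^{-1}L$ (or $Lu^{-1}$) equal to $\EXACTOR$. This cannot work: if $M$ is a strongly connected NFA for $L$, then $u^{-1}L$ is accepted by the same transition graph with the initial state set replaced by $\delta(q_0,u)$, and $Lu^{-1}$ by the same graph with final states replaced by $\{q:\exists f\in F,\ f\in\tilde\delta(q,u)\}$. Either way the underlying graph is unchanged, hence still strongly connected, hence the quotient remains in $\calC$ by Proposition~\ref{prop:old-upper}(3). So no quotient of a strongly-connected-automaton language can be hard. The paper sidesteps this by exhibiting (over $\{0,1,a\}$) languages $A=a0^*$ and $B=B_1\cup B_2$ with $\EXACTOR = B\mid A$, where $B$ is \emph{not} certified via Proposition~\ref{prop:old-upper} at all --- instead an explicit $\NC{0}$ proof system for $B$ is written down by hand. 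That direct construction is the missing ingredient in your proposal.
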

\begin{proof}
As noted in \cite{BDKMSSTV13}, $\Th^n_2$ has an $\NC{0}$ proof system
but its complement $\EXACTOR\cup 0^*$ does not. 
The languages denoted by the regular expressions $1$, $0^*$, $10^*$,
and the languages $\Th_1$, $\Th_2$ all have $\NC{0}$ proof systems.
The language $\EXACTOR$ does not, but it can be written as $0^* \cdot
10*$ (concatenation), as $\Th_2 \Delta \Th_1$ (symmetric difference),
as the result of cyclic shifts or permutations on $10^*$, and as the
shuffle of $1$ and $0^*$.

To see the last two non-closures, it is easier to use non-binary
alphabets; the coding back to $\{0,1\}$ is straightforward.  Over the
alphabet $\{0,1,a,b\}$, the languages $\left( 0^*10^* \cup
(0+1+a)^*a(0+1+a)^*\right) $ and $\left( 0^*10^* \cup
(0+1+b)^*b(0+1+b)^*\right) $ both have $\NC{0}$ proof systems (this
follows from Proposition~\ref{prop:old-upper} (2)), but their
intersection is $\EXACTOR$. Also, consider the languages $A=a0^*$,
$B_1=\{ xay \mid |x|=|y|, x \in \EXACTOR, y\in 0^*\}$, $B_2 = \{xay
\mid |x|=|y|, x \in (0+1)^*, y\in \Th_1 \}$. Then $A$ and $B=B_1\cup
B_2$ have $\NC{0}$ proof systems but $\EXACTOR = B \mid A$.\\ 
(A proof system for $B$ is as follows: the input proof at
length $2n+1$ consists of a word $w\in (0+1)^n$ and the sequence of $n$
states $q_1, \ldots , q_n$ allegedly seen by an automaton $M$ for
$\EXACTOR$ on reading $w$. The circuit copies $w$ into $x$. If
$q_{i-1},w_i,q_i$ is consistent with $M$, then it sets $y_i$ to 0,
otherwise it sets $y_i$ to 1. It can be verified that the range of
this circuit is exactly $B^{=2n+1}$.)
\qed\end{proof}

A natural idea is to somehow use the structure of the syntactic monoid
(equivalently, the unique minimal deterministic automaton) to decide
whether or not a regular language has an $\NC{0}$ proof system, and if
so, to build one. Unfortunately, this idea collapses at once: the
languages $\EXACTOR$ and $\Th_2$ have the same syntactic monoid; by
Proposition~\ref{prop:old-lower}, $\EXACTOR$ has no $\NC{0}$ proof
system; and by Proposition~\ref{prop:old-upper} $\Th_2$ has such a
proof system.

The next idea is to use the structure of a well-chosen
(nondeterministic) automaton for the language to build a proof system;
Proposition~\ref{prop:old-upper} does exactly this. It describes two
possible structures that can be used. However, one is subsumed
in the other; see Observation~\ref{obs:absorb-scc} below.

\begin{obs}
\label{obs:absorb-scc}
Let $L$ be accepted by an automaton with a universally reachable
absorbing final state. Then $L$ is accepted by a strongly connected
automaton.
\end{obs}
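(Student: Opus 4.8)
The plan is to start from an automaton witnessing the hypothesis and add enough ``back-edges'' to make its transition digraph strongly connected, then argue that the absorbing property prevents these new edges from enlarging the accepted language. Concretely, let $M$ be a (nondeterministic) automaton for $L$ with initial state $q_0$ and a final state $f$ that is \emph{absorbing} (for every letter $a$, $f$ has an $a$-transition and every transition out of $f$ leads back to $f$) and \emph{universally reachable} (from every state there is a directed path to $f$ in the transition digraph). Define $M'$ to have the same states, initial state and set of final states as $M$, with transition relation that of $M$ together with a new transition $f\xrightarrow{a}q$ for every letter $a$ and every state $q$.

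First I would check strong connectivity of $M'$: given states $q,q'$, universal reachability yields a path $q\rightsquigarrow f$ in $M$ (hence in $M'$), and the new edge $f\xrightarrow{a}q'$ (for any letter $a$) completes a path $q\rightsquigarrow f\rightarrow q'$; in particular every state is reachable from $q_0$, so no trimming is needed. Next, $L(M)\subseteq L(M')$ is immediate since $M'$ keeps all transitions, the initial state and the final states of $M$.

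The crux is the reverse inclusion $L(M')\subseteq L(M)$. Take $w=w_1\cdots w_\ell\in L(M')$ with an accepting run $q_0=p_0\xrightarrow{w_1}p_1\cdots\xrightarrow{w_\ell}p_\ell$ in $M'$. If this run uses no new edge it is already an accepting run of $M$. Otherwise let $i$ be the first position at which a new edge is used; since every new edge has source $f$, we get $p_{i-1}=f$, and the prefix $p_0\xrightarrow{w_1}\cdots\xrightarrow{w_{i-1}}p_{i-1}$ uses only old edges, hence is a legitimate run of $M$ reaching $f$. Because $f$ is absorbing, $M$ also admits the run $p_0\xrightarrow{w_1}\cdots\xrightarrow{w_{i-1}}f\xrightarrow{w_i}f\cdots\xrightarrow{w_\ell}f$, which ends in the accepting state $f$; so $w\in L(M)$. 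Thus $L(M')=L$ and $M'$ is strongly connected.

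The main thing to get right is precisely this ``complete the run inside $f$'' step: reading an arbitrary suffix from the absorbing state $f$ keeps the run alive and accepting, which is exactly what makes the added back-edges harmless, and it is where both the absorbing property and the finality of $f$ are used. The only remaining points are routine: dispatching degenerate cases (empty alphabet, $L=\emptyset$), and noting that an alternative construction adds only the single back-edge $f\xrightarrow{a}q_0$ after first trimming states unreachable from $q_0$, with the same analysis.
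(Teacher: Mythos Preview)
Your proof is correct and follows essentially the same approach as the paper: add back-edges out of the absorbing final state to make the automaton strongly connected, then argue that once a run reaches $f$ it can be completed inside $f$ using the absorbing self-loops, so the language is unchanged. The only cosmetic difference is that the paper adds $\epsilon$-moves from $f$ to every state rather than labeled transitions $f\xrightarrow{a}q$; your version avoids $\epsilon$-NFAs and spells out the $L(M')\subseteq L(M)$ argument in more detail, but the idea is identical. (Note also that your proof only uses the self-loop part of ``absorbing,'' not the stronger ``every transition out of $f$ returns to $f$'' clause you stated, so it works under the paper's weaker definition as well.)
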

\begin{proof}
Let $M$ be the non-deterministic automaton with universally reachable
and absorbing final state $q$. That is, $q$ is an accepting state such
that (1)~$q$ is reachable from every other state of $M$, and (2)~there
is a transition from $q$ to $q$ on every letter in $\Sigma$. 
Add $\epsilon$-moves from $q$ to every state of $M$ to get automaton
$M'$. Then $M'$ is strongly connected, and $L(M')=L(M)$. 
\qed\end{proof}


A small generalisation beyond strongly connected automata is automata
with exactly two strongly connected components. However, the automaton
for $\EXACTOR$ is like this, so even with this small extension, we can
no longer construct $\NC{0}$ proof systems. (In fact, we need as much
as $\Omega(\log \log n)$ depth.)

Finite languages do not have strongly connected automata. But they are
strict star-free and hence have $\NC{0}$ proof systems. Strict
star-free expressions lack non-trivial Kleene closure. What can we say
about their Kleene closure? It turns out that for any regular
language, not just a  strict-star-free one, the Kleene closure has an
$\NC{0}$ proof system. 

\begin{theorem}
If $L$ is regular, then $L^*$ has an $\NC{0}$ proof system.
\end{theorem}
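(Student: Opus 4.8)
The plan is to exhibit a single nondeterministic automaton for $L^*$ whose transition graph (allowing $\epsilon$-moves) is strongly connected, and then to quote Proposition~\ref{prop:old-upper}(3) — applied to an $\epsilon$-automaton, exactly the form in which the proof of Observation~\ref{obs:absorb-scc} already uses it.

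First I would dispose of trivial cases: since $L^* = (L\setminus\{\epsilon\})^*$ we may assume $\epsilon\notin L$, and if $L=\emptyset$ then $L^*=\{\epsilon\}$ is finite and hence has an $\NC{0}$ proof system. So assume $L\neq\emptyset$ and $\epsilon\notin L$. Next, fix any finite automaton $M=(Q,\Sigma,\delta,q_0,F)$ accepting $L$ and \emph{trim} it, deleting every state not reachable from $q_0$ and every state from which no state of $F$ is reachable; this leaves $L(M)=L$ unchanged, and now $q_0\notin F$, $F\neq\emptyset$, every state is reachable from $q_0$, and every state can reach $F$. Then I would build $M^*$ from $M$ by adding one \emph{fresh} state $q_s$, declaring $q_s$ the start state and $F\cup\{q_s\}$ the set of final states, and adding $\epsilon$-transitions $q_s\to q_0$ and $f\to q_s$ for every $f\in F$.

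Two things then have to be checked. Strong connectivity: trimming gives that $q_0$ reaches all of $Q$ and that every state reaches some $f\in F$, so together with the new edges $f\to q_s\to q_0$ every state reaches $q_s$ and $q_s$ reaches every state, i.e.\ $M^*$ is strongly connected. Correctness, $L(M^*)=L^*$: the word $\epsilon$ is accepted at $q_s$; conversely, in any run the new $\epsilon$-edges can only occur in the pattern $f\to q_s\to q_0$ (since $q_s$ has no letter-edges), so a nonempty accepted word splits at these points into consecutive $M$-runs from $q_0$ to $F$, each spelling a word of $L$, whence the word lies in $L^+$; and every word of $L^+$ is accepted by threading such runs together. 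Applying Proposition~\ref{prop:old-upper}(3) to $M^*$ then yields an $\NC{0}$ proof system for $L^*$.

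The subtle point — and essentially the only obstacle — is insisting on a genuinely new hub state $q_s$. The obvious shortcuts (adding $\epsilon$-edges from $F$ back to $q_0$ and declaring $q_0$ final, or identifying the hub with $q_0$) save states but are \emph{incorrect} in general: if $q_0$ has an incoming letter-edge in $M$ — for instance for $L=(ab)^*c$ — such automata accept words outside $L^*$. Because $q_s$ has no incoming letter-edges, an accepting run ending at $q_s$ must have just completed a whole $L$-factor, which is what makes the construction sound; and trimming is likewise essential, since a surviving dead or unreachable state would destroy strong connectivity. If one prefers to avoid invoking Proposition~\ref{prop:old-upper}(3) for an automaton with $\epsilon$-moves, one can instead $\epsilon$-eliminate $M^*$ and re-verify strong connectivity, using that after $\epsilon$-closure every state of $M^*$ is the source of some letter-edge.
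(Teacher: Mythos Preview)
Your approach is essentially the paper's: construct a strongly connected (nondeterministic, $\epsilon$-)automaton for $L^*$ from a trimmed automaton for $L$ and invoke Proposition~\ref{prop:old-upper}(3). The paper in fact takes the very shortcut you warn against---it adds $\epsilon$-moves from every final state back to $q_0$ and makes $q_0$ final---so your version with a fresh hub state $q_s$ is the more careful of the two; as your $(ab)^*c$ example shows, the paper's construction is only correct under the tacit assumption that $q_0$ has no incoming letter-edges, which is easy to arrange but not stated there.
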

\begin{proof}
Let $M$ be an automaton accepting $L$, with no useless states. Adding
$\epsilon$ moves from every final state to the start state $q_0$, and
adding $q_0$ to the set of final states, 
gives an automaton $M'$ for $L^*$. 
Now $M'$ is strongly
connected, so Proposition~\ref{prop:old-upper} gives the $\NC{0}$
proof system.
\qed\end{proof}

Based on the above discussion and  known
(counter-) examples, we conjecture the following
characterization. The structure implies the  proof system, but the
converse seems hard to prove.

\begin{conjecture}
Let $L$ be a regular language. The following are equivalent:
\begin{enumerate}
\item $L$ has an $\NC{0}$ proof system. 
\item For some finite $k$, $L = \bigcup_{i=1}^k u_i \cdot L_i \cdot v_i$, where each $u_i,v_i$ is a
  finite word, and each $L_i$ is a regular language accepted by some
  strongly connected automaton. 
\end{enumerate}
\end{conjecture}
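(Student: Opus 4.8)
The plan is to prove the two implications separately. Only $(2)\Rightarrow(1)$ is within reach of the machinery already built; $(1)\Rightarrow(2)$ is the genuinely hard direction (the one the paper flags as hard to prove), so any serious attempt is really an attack on that implication.

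For $(2)\Rightarrow(1)$ the argument is short. Suppose $L=\bigcup_{i=1}^k u_i L_i v_i$ with each $L_i$ accepted by a strongly connected automaton. By Proposition~\ref{prop:old-upper}(3) each $L_i$ has an $\NC{0}$ proof system. Concatenating $L_i$ on the right with the finite set $\{v_i\}$ and then on the left with $\{u_i\}$ keeps the $\NC{0}$ proof system by Theorem~\ref{thm:closures}(2), applied twice; and the finite union $\bigcup_i u_iL_iv_i$ is handled by Theorem~\ref{thm:closures}(1). The only care needed is to reconcile the output-length functions $m(n)$ across the union (pad the shorter proofs), which is routine. So $L$ has an $\NC{0}$ proof system.

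For $(1)\Rightarrow(2)$ I would fix an $\NC{0}$ proof system $\{C_n\}$ for $L$, let $c$ bound the number of input bits on which any output bit depends (recall such circuits have linear size, so $m(n)=O(n)$), and let $M$ be the minimal DFA of $L$, with $s$ states. The strategy is a pumping/pigeonhole argument that extracts a strongly connected core. For large $n$ and $w\in L^{=n}$ with a preimage $x$, running $M$ on $w$ produces, by pigeonhole, many factorisations $w=w_1w_2w_3$ with $w_2$ a loop at some state $r$. The crucial step is to perform a \emph{local surgery} on $x$: replace the proof-bits responsible for $w_2$ by other bits so that $C_n$ outputs $w_1w_2'w_3$ for a different loop word $w_2'$ at $r$, leaving $w_1$ and $w_3$ untouched. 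Iterating such surgeries and invoking soundness, one would conclude that, once a bounded-length prefix and a bounded-length suffix are peeled off (there are only finitely many such prefixes and suffixes, since their lengths are bounded in terms of $c$ and $s$), every accepting $M$-run on a word of $L$ stays inside one strongly connected component of $M$ and traverses it ``freely'' — which is precisely what forces a strongly connected description $L_i$. The peeled prefixes and suffixes become the $u_i$ and $v_i$; the component together with its induced start/accepting states becomes $L_i$.

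The main obstacle is exactly the local-surgery step. In an $\NC{0}$ circuit each output depends on few inputs, but a single input bit can feed very many outputs, so naively rewriting the bits controlling $w_2$ can ripple out and corrupt $w_1$ or $w_3$; completeness yields \emph{some} preimage, not one with the independence we need. The fix should come from the linear-size bound: on average input bits have $O(1)$ fan-out, so for suitably large $n$ one can locate a pumping window $w_2$ whose controlling proof-bits all have bounded fan-out, making the surgery genuinely local. The delicate combinatorial heart — and presumably where the authors got stuck — is upgrading ``the set of achievable middle factors is pumping-closed'' to ``it is exactly the language of a strongly connected automaton'' (rather than something weaker, or a union over several components that does not fit the template). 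A secondary subtlety is the interaction across the union: different words of $L^{=n}$ may route through different components, and one must guarantee the final decomposition is a \emph{finite} union of the stated shape, leaning on the finiteness of $M$ and of the prefix/suffix lengths. There is also the meta-question the paper raises — whether property~(2) is even decidable for a given regular language — which, while not needed for the conjecture itself, suggests that a clean structural proof will not be entirely routine.
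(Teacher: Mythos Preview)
The statement is a \emph{conjecture} in the paper, not a theorem; the paper offers no proof. It remarks only that ``the structure implies the proof system, but the converse seems hard to prove,'' i.e., it asserts $(2)\Rightarrow(1)$ and explicitly leaves $(1)\Rightarrow(2)$ open.

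Your argument for $(2)\Rightarrow(1)$ is correct and is exactly the reasoning the paper has in mind: Proposition~\ref{prop:old-upper}(3) gives each $L_i$ an $\NC{0}$ proof system, and Theorem~\ref{thm:closures} (closure under concatenation with finite sets and under finite union) assembles the pieces into one for $L$. On the easy direction you match the paper.

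For $(1)\Rightarrow(2)$ there is nothing in the paper to compare against. Your outline is a reasonable line of attack, and you correctly isolate the obstruction: in an $\NC{0}$ proof system a single input bit may have unbounded fan-out, so the local-surgery step that would let you swap one loop word $w_2$ for another without disturbing $w_1,w_3$ is not guaranteed. Your proposed fix via an averaging argument (linear size implies $O(1)$ average fan-out) does not obviously yield a pumping window whose controlling bits are \emph{all} low-fan-out simultaneously, and even if it did, the further step from ``pumping-closed set of middle factors'' to ``language of a strongly connected automaton'' is, as you note, unproved. These are genuine gaps, not oversights on your part; the paper does not close them either, and the direction remains open.
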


An interesting question arising from this is whether
the following languages are decidable: 
\begin{eqnarray*}
\REGSCC &=& \left\{ M \mid \parbox{0.6\textwidth}{
$M$ is a finite-state automaton; $L(M)$ is
  accepted by some  strongly connected finite automaton}\right\} \\
\REGNCPS &=& \left\{ M \mid \parbox{0.6\textwidth}{$M$ is a finite-state automaton; $L(M)$ has an $\NC{0}$
  proof system}\right\} 
\end{eqnarray*}
(Instead of a finite-state automaton, the input language could be
described in any form that guarantees that it is a regular language. )

We now establish one of our main results. $\NC{0}$ is the restriction of
$\AC{0}$ 
where the fanin of
each gate is bounded by a constant. By putting a fanin bound that is
$\omega(1)$ but $o(n^c)$ for every constant $c$ (``sub-polynomial''),
we obtain intermediate classes.  In particular, restricting the fanin
of each gate to be at most $\polylog n$ gives the class that we call
$\pAC{0}$ lying between $\NC{0}$ and $\AC{0}$.  We show that it is
large enough to have proof systems for all regular languages.
As mentioned earlier, Proposition~\ref{prop:old-lower} implies that
this upper bound is tight.

\begin{theorem}
\label{thm:regularPS}
Every regular language has a $\polylog\AC{0}$ proof system.
\end{theorem}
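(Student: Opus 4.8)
The plan is to use the strongly connected automata of Proposition~\ref{prop:old-upper}(3) as building blocks, gluing polynomially many of them together and paying for the gluing with exactly the extra fan-in that $\pAC0$ (but not $\NC0$) provides.

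First I would fix a finite automaton $M=(Q,\Sigma,\delta,q_0,F)$ for $L$. Condensing $M$ into its strongly connected components yields a DAG of height at most $|Q|=O(1)$, so every accepting run of $M$ visits a bounded number $r\le|Q|$ of SCCs $T_1,\dots,T_r$ in order, joined by single crossing edges, spending some lengths $\ell_1,\dots,\ell_r$ inside them. Writing $L(T,a,b)$ for the language of $T$ read as an automaton with start state $a$ and unique final state $b$, this gives
\[
 L^{=n}=\bigcup\; L(T_1,a_1,b_1)^{=\ell_1}\cdot c_1\cdot L(T_2,a_2,b_2)^{=\ell_2}\cdots c_{r-1}\cdot L(T_r,a_r,b_r)^{=\ell_r},
\]
the union ranging over the constantly many choices of SCC-path, entry/exit states $a_k,b_k$ and crossing letters $c_k$, and over the at most $n^{|Q|}$, hence polynomially many, length profiles with $\textstyle\sum_k \ell_k=n-(r-1)$. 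Each $T_k$ is strongly connected, so by Proposition~\ref{prop:old-upper}(3) the language $L(T_k,a_k,b_k)$ has an $\NC0$ proof system.

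Next I would build the $\pAC0$ proof system $\mathcal C_n$. A proof consists of a header of $O(\log n)$ bits naming the SCC-path, the states $a_k,b_k$, the crossing letters and the length profile, together with a body of $O(n)$ constant-width cells arranged so that the cells in the positions of the $k$-th block are, verbatim, a proof for the $\NC0$ sub-system of $L(T_k,a_k,b_k)^{=\ell_k}$ (which we may take to have one constant-width cell per output position). For output position $i$ the circuit first computes the prefix sums $s_k=\sum_{j\le k}(\ell_j+1)$, a constant number of additions of $O(\log n)$-bit integers and hence in $\pAC0$, and locates $i$ by comparing the \emph{constant} $i$ against $s_0,\dots,s_r$ (each such comparison is a depth-$2$ formula of fan-in $O(\log n)$); this reveals the block $k$ containing $i$ and whether $i$ is a crossing position or a block endpoint. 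A crossing position simply emits the corresponding header letter, snapped by a finite case analysis to a genuine crossing edge. Otherwise the circuit emits bit $i'=i-s_{k-1}$ of the $k$-th block's $\NC0$ sub-circuit, and this costs no further fan-in: the $\NC0$ constructions of Proposition~\ref{prop:old-upper} can be taken \emph{local}, with output bit $i'$ depending only on body cells in a window $[\,i'-O(1),\,i'+O(1)\,]$ --- which are exactly the global cells $[\,i-O(1),\,i+O(1)\,]$ --- plus the information whether $i'$ is the first or last position of its block, already extracted above. Hence each output bit of $\mathcal C_n$ is an $\NC0$ function of a constant window of the body, a constant number of header bits and the few comparison outcomes, so $\mathcal C_n$ has constant depth and $\polylog n$ fan-in. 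Completeness is clear: encode the accepting run of a given word. Soundness holds because for \emph{any} body each block's sub-system outputs a length-$\ell_k$ word of $L(T_k,a_k,b_k)$ and the snapped crossing letters splice these into a genuine accepting run of $M$. Tightness is Proposition~\ref{prop:old-lower}.

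I expect the crux to be soundness against a \emph{bogus} header: one naming a length profile in which some $\ell_k$ is infeasible, i.e.\ $L(T_k,a_k,b_k)^{=\ell_k}=\emptyset$, leaving nothing valid to output. Recognising infeasibility amounts to a congruence test $\ell_k\equiv\star\pmod{\rho_k}$ on an $O(\log n)$-bit number, which $\pAC0$ cannot do --- it cannot even compute $\PARITY$ on $\log n$ bits within constant depth and $\polylog$ fan-in. The fix I have in mind is never to store the $\ell_k$ directly: the header instead stores, for each of the $r-1$ free blocks, an integer $\mu_k$ from which the circuit forms $\ell_k=\rho_k\mu_k+d_k$ (multiplication by the constant $\rho_k$ being a constant number of additions), so feasibility is automatic once $\mu_k$ passes a hard-wired threshold, to which the circuit clamps since $n$ is fixed; the last block's length $\ell_r=n-(r-1)-\sum_{k<r}\ell_k$ is then derived by subtraction. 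Ensuring that this \emph{derived} length is itself feasible is the delicate residue bookkeeping: it needs a careful choice of the order in which the periods are spent and, where necessary, a splitting of the slack between the last two blocks (whose periods can be made coprime by refining the decomposition), so that every residue class that can arise is actually realised. This bookkeeping, together with checking that the building blocks of Proposition~\ref{prop:old-upper} really are local in the sense used above, is where the genuine work lies; the circuit-complexity layer --- prefix sums, comparisons, local splicing --- is then routine.
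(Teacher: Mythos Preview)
Your SCC-decomposition approach is genuinely different from the paper's, and the gap you yourself flag as the crux is real and is not closed by the fix you propose. Encoding $\ell_k=\rho_k\mu_k+d_k$ for $k<r$ and then splitting the remaining slack $S$ between two coprime-period tail blocks still requires knowing $S$ modulo $\rho_{r-1}\rho_r$; but $S=n-(r-1)-\sum_{k\le r-2}(\rho_k\mu_k+d_k)$ has a residue that depends on the free parameters $\mu_k$ through terms $\rho_k\mu_k\bmod\rho_r$, and extracting those is precisely the modular arithmetic on $O(\log n)$-bit integers that you correctly argue $\pAC{0}$ cannot perform. Concretely, with three blocks of periods $2,2,3$, the valid split of the slack between the last two blocks depends on $\mu_1\bmod 3$. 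A repair does exist --- carry each residue $c_k\in[0,P)$ with $P=\mathrm{lcm}(\rho_1,\dots,\rho_r)$ explicitly in the header alongside the quotient $\mu_k$, or more cleanly parameterize the valid length profiles as a hardwired base point plus an integer combination of a fixed basis for the lattice $\{\delta\in\mathbb{Z}^r:\sum_k\delta_k=0,\ \rho_k\mid\delta_k\}$, so that every header decodes to a feasible profile and only range comparisons remain --- but this is substantially more structure than ``reorder and make the last two periods coprime.''

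The paper sidesteps the whole issue by never encoding lengths at all. It fixes a balanced interval tree over $(0,n+1]$ and has the proof label each node with a \emph{pair of states} from $Q$, the root label being hardwired. To produce output bit $k$ it walks the root-to-leaf path, finds the lowest node whose labeling is consistent with all its ancestors, and outputs either $a_k$ (if the leaf itself qualifies) or bit $k$ of a hardwired witness word for that node's state pair. Because $|Q|=O(1)$, every per-node check reads $O(1)$ bits; the only large fan-in is the $O(\log n)$-way conjunction along one root-to-leaf path, which is exactly what $\pAC{0}$ permits. The variable data is which \emph{states} sit at the nodes of a fixed partition, not which \emph{lengths} the pieces of a variable partition have, and that is what makes the residue problem vanish.
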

\begin{proof}
Let $A=(Q,\Sigma, \delta, q_0, F)$ be an automaton for $L$.  We assume
that $\Sigma=\{0,1\}$, larger finite alphabets can be suitably
coded. We unroll the computation of $A$ on inputs of length $n$
to get a layered
branching program $B$ with $n+1$ layers numbered 0 to $n$.  (We can
work directly with the automaton, as discussed in the proof idea, 
but this equivalent formulation is useful in proving the next
theorem as well.)  The initial layer of $B$ has just the start node
$s$ which behaves like $q_0$ in the automaton, while every other layer
of the branching program has as many vertices as $|Q|$.  Since $A$ may
have multiple accepting states, we add an extra layer at the end with
a single sink node $t$, and connect all copies of accepting states at
layer $n$ to $t$ by edges labeled 1.  Note that $B$ has the following
properties:
\begin{itemize}
\item Length $l=n+2$.
\item Every layer except the first and last layer has width (number of vertices
	in that layer) $w=|Q|$.
\item Edges are only between consecutive layers. These edges and
	their labelling are according to $\delta$.
\item All edges from layer $i-1$ to layer $i$ are labelled
	either $x_i$ or $\overline x_i$.
\item A word $a=a_1 \ldots a_n$ is accepted by $A$ if and only if $B$
  has a path from $s$ to $t$ (with $n+1$ edges) with all edge labels
  consistent with $a$.
\end{itemize}
 
Any vertex $u\in B$ can be indexed by a two tuple $(\ell,p)$ where
$\ell$ stands for the layer where $u$ appears and $p$ is the position
where $u$ appears within layer $\ell$.

Consider the interval tree $T$ for $(0,n+1]$ described above. The
  input to the proof system consists of a pair of labels $\langle
  u,v\rangle$ for each node in the interval tree. The labels $u,v$
  point to nodes of $B$. For interval $(i,j]$, the labels are of the
    form $u=(i,p)$, $v=(j,q)$.  Since $i,j$ are determined by the node
    in $T$, the input only specifies the pair $\langle p,q\rangle$
    rather than $\langle u,v\rangle$. That is, it specifies a pair of
    states from $A$, as discussed in the proof idea. At the root node,
    the  labeling is hardwired to be $\langle s,t\rangle$.

Given a word $a=a_1 \ldots a_n$ and a labeling as above of the
interval tree, we define feasibility and consistency as follows:
\begin{enumerate}
\item A leaf node $(k-1,k]$ with $k\in [n]$, labeled $\langle p,q\rangle$,  is
\begin{enumerate}
\item {\bf feasible} if there exists an edge from $(k-1,p)$ to $(k,q)$
  in $B$. (That is, there exists $b\in \Sigma$ such that $q\in
  \delta(p,b)$.)
\item {\bf consistent} if there exists an edge from $(k-1,p)$ to
  $(k,q)$ in $B$ labeled $x_k$ if $a_k=1$, labeled $\overline{x_k}$ if
  $a_k=0$. (That is, $q\in \delta(p,a_k)$.)
\end{enumerate}
(The case $k=n+1$ is simpler: feasible and consistent if $p$ is a
final state of $A$.)
\item An internal node $(i,j]$ labeled $\langle p,q\rangle$  is
\begin{enumerate}
\item {\bf feasible} if there exists a path from $(i,p)$ to $(j,q)$
  in $B$. (That is, there exists a word $b\in \Sigma^{j-i}$ such that $q\in
  \tilde{\delta}(p,b)$.)
\item {\bf consistent} if it is feasible, both its children are
  feasible, and the labels $\langle p',q'\rangle$ and $\langle
  p'',q''\rangle$ of its left and right children respectively satisfy:
  $p=p'$, $q=q''$, $q'=p''$.
\end{enumerate}
\item A node is {\bf fully consistent} if all its ancestors (including
  itself) are consistent. 
\end{enumerate}
Since the label at the root of $T$ is hardwired, the root node is
always feasible. But it may not be consistent. 

For each node $(i,j]$ in the interval tree, and each potential
  labeling $\langle p,q\rangle$ for this node, let $u=(i,p)$ and
  $v=(j,q)$. Define the predicate $R(u,v)$ to be 1 if and only if
  there is a path from $u$ to $v$ in $B$. (ie this potential labeling
  is feasible.) Whenever $R(u,v)=1$, fix a partial assignment
  $w_{u,v}$ that assigns $1$ to all literals that occur as labels
  along an aribtrarily chosen path from $u$ to $v)$. Note that
  $w_{u,v}$ assigns exactly $j-i$ bits, to the variables $x_{i+1},
  \ldots , x_j$. We call $w_{u,v}$  the {\bf feasibility witness} for
  the pair $(u,v)$. 

Let $y$ be the output string of the proof system we construct.
A bit $y_k$ of the output $y$ is computed as follows:
Find the lowest ancestor of the node $(k-1,k]$ that is fully
  consistent. 
\begin{itemize}
\item If the leaf node $(k-1,k]$ is fully consistent, output $a_k$.
\item If there is no such node, then the root node is inconsistent.
  Since it is feasible, the word $w_{s,t}$ is defined. Output the
  $k$th bit of $w_{s,t}$.
\item If such a node is found, and it is not the leaf node itself but
  some $(i,j]$ labeled $\langle p,q\rangle$, let $u=(i,p)$ and
    $v=(j,q)$. The word $w_{u,v}$ is defined and assigns a value to
    $x_k$.  Output this value.
\end{itemize}
It follows from this construction that every word $a \in L$ can be
produced as output: give in the proof the word $a$, and label the
interval tree fully consistent with an $s-t$ path of $B$ consistent
with $a$ (equivalently, an accepting run of $A$ on $a$). 

It also follows that every word $y$ output by this construction 
belongs to $L$. On any proof, moving down from the root of the
interval tree, locate the frontier of lowest fully consistent
nodes. These nodes are feasible and correspond to a partition of the
input positions, and the procedure described above outputs a word
constructed by patching together the feasibility-witnesses for each
part.  

To see that the above construction can be implemented in depth $O(\log \log n)$
with $O(1)$ alternations, observe that each of the conditions - feasibility,
consistency and equality of two labels depend on $O(\log w)$ bits. Hence depth
of $O(\log \log n)$ and $O(1)$ alternations suffices for their implementation.

More formally,
define the following set of predicates: 
\begin{itemize}
\item 
$\EQUAL: [w]^2 \longrightarrow \{0,1\}$ the Equality predicate on
$\log w$ bits. 
\item 
For each $0 \le i < j \le n+1$, $\FEAS{i}{j}: [w]^2 \longrightarrow
\{0,1\}$ is the Feasibility predicate with arguments the labels $(p,q)$
at interval $(i,j]$.
\item 
For each $0 \le i < j+1 \le n+1$, $\CONS{i}{j}: [w]^6 \longrightarrow
\{0,1\}$ is the Consistency predicate at an internal node, with
arguments the labels at interval $(i,j]$ and at its children.
\item 
For each $0 < k \le n+1$, $\LCONS{k}: [w]^2 \times \Sigma
\longrightarrow \{0,1\}$ is the Consistency predicate at leaf
$(k-1,k]$ with arguments the label $\langle p,q\rangle$ 
  and the bit $a_k$ at the leaf.
\end{itemize}
All the predicates depend on $O(\log w)$ bits. So a naive
truth-table implementation suffices to compute them in depth $O(\log w)$ with
$O(1)$ alternations.

For any $0 < k\le n+1$, let the nodes on the path from $(k-1,k]$ to
  the root of the interval tree be the intervals
  $(k-1,k]=(i_0,j_0),(i_1,j_1], \ldots , (i_r,j_r]=(0,n+1]$.  Note: $r
          \in O(\log n)$.

Given a labeling of the tree, the output at position $k$ is given by
the expression below.  (It looks ugly, but it is just implementing the
scheme described above.  We write it in this detail to make the
$\pAC{0}$ computation explicit.)
\begin{eqnarray*}
y_k &=& \left[ a_k \wedge \LCONS{k} \wedge \bigwedge_{h=1}^{r} \CONS{i_h}{j_h}\right]\\
&\vee& 
\left[~(w_{s,t})_k\wedge \overline{\CONS{0}{n+1}}~\right]
\\
&\vee& 
\left[
\bigvee_{h=1}^{r}
(w_{(i_h,p_h),(j_h,q_h)})_k
\wedge 
\overline{\CONS{i_{h-1}}{j_{h-1}}} \wedge \bigwedge_{g=h}^{r}
\CONS{i_g}{j_g}
\right]
\end{eqnarray*}
where the arguments to the predicates are taken from the tree
labeling. This computation adds $O(1)$ alternations and $O(\log \log
n)$ depth to the computation of the predicates, so it is in $\pAC{0}$. 
\qed\end{proof}

While proving Theorem~\ref{thm:regularPS}, we unrolled the computation of a
$w$-state automaton on inputs of length $n$ into a layered branching
program BP of width $w$ with $\ell=n+2$ layers. The BP so obtained is
nondeterministic whenever the automaton is. The BP has a very
restricted structure which we exploited to construct the
$\pAC{0}$ proof system.

We observe that some restrictions on the BP structure can be relaxed
and still we can construct a $\pAC{0}$ proof system.  
\begin{definition}
\label{def:restrBP}
A branching program for length-$n$ inputs is {\bf structured} if it
satisfies the following:
\begin{enumerate}
\item It is layered: vertices are partitioned into $n+1$ layers $V_0, \ldots
  , V_n$  and all edges are between adjacent layers $E \subseteq
  \cup_i (V_{i-1} \times V_{i})$.
\item Each layer has the same size $w = |V_i|$, the width of the
  BP. (This is not critical; we can let $w=\max|V_i|$.)
\item There is a permutation $\sigma\in S_n$ such that for $i\in [n]$,
  all edges in $V_{i-1} \times V_{i}$ read  $x_{\sigma(i)}$ or
  $\overline{x_{\sigma(i)}}$.
\end{enumerate}
\end{definition}
Non-uniform automata \cite{Bar89,BT88} give rise to branching programs
that are structured with $w$ the number of states in the
automaton. For instance, the language $\{xx \mid x \in \{0,1\}^*\}$ is
not regular. But if the input bits are provided in the order $1,m+1,
2, m+2, \ldots , m, 2m$ then it can be decided by a finite-state
automaton.  This gives rise to a structured BP where $\sigma$ is the
inverse of the above order.  (eg $r_2=m+1$, $r_3=2$, $\sigma(m+1)=2$,
$\sigma(2)=3$. )

The idea behind the construction in Theorem~\ref{thm:regularPS} works
for such structured BPs. It yields a proof system with depth
$O(\log\log n + \log w)$. This means that for $w\in O(\poly \log n)$,
we still get $\pAC{0}$ proof systems. Potentially, this is much bigger
than the class of languages accepted by non-uniform
finite-state automata. Formally,
\begin{theorem}
\label{thm:BPs}
Languages accepted by structured branching programs of width $w\in
(\log n)^{O(1)}$ have $\pAC{0}$ proof systems. 
\end{theorem}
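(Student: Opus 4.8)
The plan is to adapt the interval-tree construction from the proof of Theorem~\ref{thm:regularPS} essentially verbatim, replacing the unrolled automaton $B$ by the given structured branching program and replacing the dyadic interval tree over positions $(0,n]$ by a dyadic tree over the \emph{layers} of the BP rather than over the input positions. First I would fix a structured BP of width $w \in (\log n)^{O(1)}$ accepting $L$, with layers $V_0,\ldots,V_n$, a single source $s \in V_0$ (add one if necessary), a single sink $t$ (add an absorbing layer as before), and a permutation $\sigma$ so that all edges from $V_{i-1}$ to $V_i$ read $x_{\sigma(i)}$ or $\overline{x_{\sigma(i)}}$. As in Theorem~\ref{thm:regularPS}, the input to the proof system will carry, for each node of the interval tree $T$ over $(0,n+1]$, a pair of labels $\langle p,q \rangle \in [w]^2$ naming a vertex in the top layer and a vertex in the bottom layer of the corresponding layer-interval, with the root hardwired to $\langle s,t\rangle$. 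The notions of \emph{feasible} (there is a path between the two named BP vertices across the interval), \emph{consistent} at an internal node (feasible, both children feasible, and the three labels chain: $p=p'$, $q'=p''$, $q=q''$), \emph{consistent} at a leaf $(k-1,k]$ with respect to the input word $a$ (the named edge exists and is labeled consistently with $a_{\sigma(k)}$), and \emph{fully consistent} carry over unchanged, as do the feasibility witnesses $w_{u,v}$.

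The output bit is computed almost exactly as before, but I would index the output by the \emph{original} variable index: the bit read at layer $k$ is $x_{\sigma(k)}$, so when a leaf $(k-1,k]$ is fully consistent we output $a_{\sigma(k)}$, and when the relevant feasibility witness $w_{u,v}$ assigns the variable $x_{\sigma(k)}$ we output that assigned value. Here I must check that the feasibility witnesses are globally consistent on overlapping variables — and they are, because the frontier of lowest fully-consistent nodes partitions the \emph{layers} $1,\ldots,n+1$ into contiguous blocks, hence (via $\sigma$) partitions the variables, so each variable $x_j$ is written by exactly one block. Completeness follows as before: given $a\in L$, feed in $a$ together with the labeling induced by an accepting $s$--$t$ path, making every node fully consistent. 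Soundness follows as before: the frontier of lowest fully-consistent nodes yields feasible layer-blocks whose feasibility-witness paths, concatenated, form a genuine $s$--$t$ path in the BP, hence the output word is accepted by the BP, i.e.\ lies in $L$.

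For the complexity bound I would reuse the explicit $\pAC{0}$ formula for $y_k$ displayed in the proof of Theorem~\ref{thm:regularPS}. Each of the predicates $\EQUAL$, $\FEAS{i}{j}$, $\CONS{i}{j}$, $\LCONS{k}$ now depends on $O(\log w)$ bits (two to six labels from $[w]$, plus one input bit), so a truth-table implementation computes each in depth $O(\log w)$ with $O(1)$ alternations. The tree has depth $r \in O(\log n)$, so the outer combination — an OR over $O(\log n)$ branches, each a conjunction of $O(\log n)$ consistency predicates — adds $O(\log\log n)$ depth and $O(1)$ alternations on top. The total is depth $O(\log\log n + \log w)$ with $O(1)$ alternations; for $w \in (\log n)^{O(1)}$ this is $O(\log\log n)$, i.e.\ a $\pAC{0}$ circuit.

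The main obstacle — really the only point needing care — is the handling of the permutation $\sigma$ in the soundness/output argument: one must confirm that sewing together feasibility witnesses over layer-blocks never assigns conflicting values to a variable and always assigns \emph{every} variable, which is exactly where structuredness (one variable per layer, a genuine permutation $\sigma$) is used and where a non-structured BP would break the argument. Everything else is a mechanical re-indexing of the Theorem~\ref{thm:regularPS} proof.
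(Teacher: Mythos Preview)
Your proposal is correct and takes essentially the same approach as the paper: the paper's proof is just the one-line remark that ``the idea behind the construction in Theorem~\ref{thm:regularPS} works for such structured BPs'' and that it ``yields a proof system with depth $O(\log\log n + \log w)$,'' which for $w\in(\log n)^{O(1)}$ is $\pAC{0}$. You have simply spelled out the details the paper omits---in particular the role of the permutation $\sigma$ in ensuring that the frontier of lowest fully-consistent nodes partitions the variables, so that the patched-together feasibility witnesses assign every variable exactly once---and your complexity accounting matches the paper's stated bound exactly.
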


For the language $\MAJ$ of strings with more 1s than 0s, and in
general for threshold languages $\Th^n_k$ of strings with at least $k$
1s, we know that there are constant-width branching programs, 
but these are not structured in the sense above. It can be shown that a
structured BP for $\MAJ$ must have width $\Omega(n)$ (a family of
growing automata $M_n$ for $\MAJ$, where $M_n$ is guaranteed to be
correct only on $\{0,1\}^n$, must have $1 + n/2$ states in $M_n$).
This is much much more than the $\polylog$ width bound used in the
construction in Theorem~\ref{thm:regularPS}.  Nevertheless, we show
below how we can modify that construction to get a $\pAC{0}$ proof
system even for threshold languages.

\begin{theorem}
For every $n$ and $t \le n$, the language $\Th^n_t$  has a \pACps.
\label{thm:thresholdPS}
\end{theorem}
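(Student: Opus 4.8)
The plan is to adapt the interval-tree construction from the proof of Theorem~\ref{thm:regularPS}, but to replace the automaton-state labels by \emph{count} labels. Build the interval tree $T$ over $(0,n]$, with leaves $(k-1,k]$ for $k\in[n]$. A proof consists of a word $a=a_1\ldots a_n$ together with, for each node $(i,j]$ of $T$, a label $c\in\{0,1,\ldots,n\}$ written in binary with $\lceil\log(n+1)\rceil=O(\log n)$ bits, intended to be the number of $1$s of $a$ in positions $i{+}1,\ldots,j$. Say that $(i,j]$ labeled $c$ is \emph{feasible} if $0\le c\le j-i$, with the single change that the root $(0,n]$ is feasible only if $t\le c\le n$; an internal node labeled $c$ with children labeled $c',c''$ is \emph{consistent} if the node and both children are feasible and $c=c'+c''$; a leaf $(k-1,k]$ labeled $c$ is consistent iff $c=a_k$; and a node is \emph{fully consistent} if it and all its ancestors are consistent. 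The \emph{feasibility witness} of $(i,j]$ labeled $c$ is the canonical string $1^{c}0^{\,j-i-c}$, and the fixed \emph{default word} is $1^{t}0^{\,n-t}$, which lies in $\Th^n_t$ because $t\le n$.

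Output bit $y_k$ is computed by the following variant of the scheme in the proof of Theorem~\ref{thm:regularPS}, along the root-to-$(k-1,k]$ path of $T$. If every node on this path is consistent, output $a_k$. If the root is not consistent, output the $k$-th bit of the default word. Otherwise let $v=(i,j]$ be the first non-consistent node met while descending from the root along the path; then the parent of $v$ is fully consistent, so $v$ is feasible, and we output the $k$-th bit of the feasibility witness of $v$, that is, the truth value of ``$k\le i+c_v$''. As before this is expressed by a short formula over the $O(\log n)$ nodes on the path with $O(1)$ alternations.

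Completeness is immediate: for $a\in\Th^n_t$, label every node with the true count of $1$s in its block; all nodes become fully consistent, the root's count is $\ge t$, and the circuit reproduces $a$. Soundness rests on the claim that whenever a node $v=(i,j]$ is fully consistent, the output word has at least $c_v$ ones among positions $i{+}1,\ldots,j$. This is proved by induction from the leaves: the leaf case is immediate; for an internal fully-consistent node $v$ with children $u,u'$, each child is either again fully consistent (apply the inductive hypothesis) or feasible-but-inconsistent, in which case its whole block is filled with its own canonical witness and hence contributes exactly its count; since $c_v=c_u+c_{u'}$ the block of $v$ receives at least $c_v$ ones. Applying the claim at the root, whose consistency forces $c_{\mathrm{root}}\ge t$, shows the output lies in $\Th^n_t$; and if the root is not consistent the default word is output, which is also in $\Th^n_t$. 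For the complexity, each count label has $O(\log n)$ bits, so every predicate used — feasibility, consistency, equality, and the witness comparison $k\le i+c_v$ — is an addition or a comparison of $O(\log n)$-bit integers, computable by a circuit of constant depth and fanin $O(\log n)=\polylog n$ (carry-lookahead for the addition, the lexicographic test for the comparisons); combining $O(\log n)$ of these along a path with $O(1)$ extra alternations and fanin $O(\log n)$ keeps the whole circuit in $\polylog\AC{0}$.

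The step I expect to be the main obstacle is the soundness claim, and in particular choosing \emph{which} witness to patch in at each position. The most direct transcription of Theorem~\ref{thm:regularPS} — patch with the witness of the lowest fully-consistent ancestor, restricted to the relevant block — does not work here: if a consistent node has one fully-consistent child (already carrying its validated count $c'$) and one inconsistent child, then restricting the parent's $1^{c}0^{\cdots}$ witness to the inconsistent child's block may contribute far fewer than the $c''=c-c'$ ones that child's own label promises, and the total count can fall below $t$. The fix is to patch instead with the witnesses of the \emph{children} of the frontier of fully-consistent nodes: each such child is feasible (its parent is consistent) and its count is precisely the quantity appearing in its parent's consistency equation, so the per-block counts telescope correctly up to the root. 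Making this interplay between the consistent frontier and the telescoping invariant precise is the real content of the argument; the remaining details are a routine adaptation of the proof of Theorem~\ref{thm:regularPS}.
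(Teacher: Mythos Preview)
Your proposal is correct, but it takes a more elaborate route than the paper. The paper exploits the monotonicity of $\Th^n_t$ directly: it hardwires the root label to $t$, defines consistency at an internal node by the \emph{inequality} $\ell(u)\le \ell(v)+\ell(w)$, and uses the trivial feasibility witness $1^{j-i}$ for every interval. The output rule is then simply ``$y_k=a_k$ if the whole root-to-leaf path is consistent, else $y_k=1$''. Soundness is a one-line induction: every fully consistent node $u=(i,j]$ has at least $\min\{\ell(u),j-i\}$ ones in $y_{i+1}\ldots y_j$. You instead insist on equality $c=c'+c''$, allow the root label to float (feasible iff $c\ge t$), and patch inconsistent blocks with the exact witness $1^{c}0^{j-i-c}$ of the first inconsistent node below the fully-consistent frontier; this forces you to track feasibility separately and to prove the telescoping invariant you describe. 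Both arguments land in $\pAC{0}$ for the same reason (addition and comparison of $O(\log n)$-bit labels). What your approach buys is generality: with the root label hardwired to $t$ and consistency-by-equality, essentially the same construction proves the corollary for $\EXACTCNT^n_t$, which the paper derives separately. What the paper's approach buys is simplicity: no feasibility predicate, no per-node witness, and no need for the frontier-versus-children analysis you correctly identify as the main subtlety.
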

\begin{proof}
We follow the approach in Theorem~\ref{thm:regularPS}: the input to
the proof system is a word $a=a_1, \ldots , a_n$ and auxiliary
information in the interval tree allowing us to correct the word if
necessary.  The labeling of the tree is different for this language,
and is as follows. Each interval $(i,j]$ in the tree gets a label
  which is an integer in the range $\{ 0,1, \ldots, j-i\}$. The
  intention is that for an input $a=a_1, \ldots , a_n$, this label is
  the number of 1s in the subword $a_{i+1}\ldots a_j$. 
  For thresholds,  we relax the constraint: we expect the label of
  interval $(i,j]$ to be {\bf no more than} the number of 1s in the
    subword.  At a leaf node $(k-1,k]$, we do not give explicit
      labels; $a_k$ serves as the label. At the root also, we do not
      give an explicit label; the label $t$ is hard-wired.  (We
      restrict the label of any interval $(i,j]$ to the range
      $[0,j-i]$, and interpret larger numbers as $j-i$.)

For any node $u$ of $T$, let $l(u)$ denote the label of $u$. A node $u$ with
children $v,w$ is {\bf consistent} if $l(u)\le l(v)+l(w)$. 

Let the output of our proof system be $y_1,\dots,y_n$. The
construction is as follows:
\begin{itemize}
\item If all nodes on the path from $(k-1,k]$ to the root in $T$ are
  consistent, then $y_k=a_k$.
\item Otherwise, $y_k=1$.
\end{itemize}
In analogy with Theorem~\ref{thm:regularPS}, we use here for each
interval $(i,j]$ the feasibility witness $1^{j-i}$, independent of the
  actual labels.
Thus the construction forces this property: at a node $u$
corresponding to interval $(i,j]$  labelled $\ell(u)$, 
the subword $y_{i+1},\dots,y_{j}$ has at least $\min\{\ell(u),j-i\}$ 1s. Thus, the
output word is always in $\Th_t^n$. Every word in $\Th_t^n$ is
produced by the system at least once, on the proof that gives, for
each interval other than $(0,n]$, the number of 1s in the
  corresponding subword.

As before, the $\CONS{i}{j}$ predicate at a node depends on 3 labels,
each of which is $O(\log n)$ bits long. A truth-table implementation
is not good enough; it will give an $\AC{0}$ circuit. But the actual
consistency check only involves adding and comparing $m=\log n$ bit
numbers. Since addition and comparison are in $\AC{0}$, this can be
done in depth $O(\log m)$ with $O(1)$ alternations. Thus the overall
depth is $O(\log \log n)$.
\qed\end{proof}
\begin{corollary}
	For every $n$ and $t\le n$, $\EXACTCNT^n_t$ has a
	$\polylog\AC{0}$ proof system.
\end{corollary}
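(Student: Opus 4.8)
The plan is to reuse the interval-tree machinery from the proof of Theorem~\ref{thm:thresholdPS} almost verbatim, making two changes. First, the consistency relation at an internal node $u$ with children $v,w$ is tightened from the inequality $l(u) \le l(v)+l(w)$ to the exact equality $l(u) = l(v)+l(w)$; the intended semantics of a correct labeling becomes ``$l((i,j])$ equals the number of ones of $a_{i+1}\cdots a_j$'', the root is still hard-wired to $t$, leaves are still labeled implicitly by $a_k$, and labels of interval $(i,j]$ are still restricted to $[0,j-i]$. Second, the feasibility witness of an interval $(i,j]$ carrying label $\ell$ is taken to be the length-$(j-i)$ word $1^{\ell}0^{j-i-\ell}$, which has exactly $\ell$ ones, rather than $1^{j-i}$.

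Because the feasibility witnesses are no longer all-ones, the $\Th$-style output rule ``set $y_k=1$ whenever the root-to-leaf path contains an inconsistency'' would overshoot the target, so I would instead borrow the output rule of Theorem~\ref{thm:regularPS}: for position $k$, locate the lowest fully consistent ancestor of the leaf $(k-1,k]$ and output $a_k$ if it is the leaf itself, the $k$-th bit of $1^{\ell}0^{j-i-\ell}$ if it is an internal node $(i,j]$ with label $\ell$, and the $k$-th bit of the root witness $1^{t}0^{n-t}$ if no ancestor is consistent; the displayed $\pAC{0}$ formula in the proof of Theorem~\ref{thm:regularPS} transfers directly, with $\CONS{i}{j}$ reinterpreted as the equality check. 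Soundness then follows by the frontier argument: the lowest fully consistent nodes partition $[n]$, over each frontier interval the output is exactly its witness and hence contributes exactly its label many ones, and walking up to the root through consistent nodes shows (by induction) that the frontier labels sum to the root label $t$, so $y$ has exactly $t$ ones. For completeness, labeling each interval with the true count of ones of its subword makes the tree fully consistent with root label $t$, so the input word is reproduced.

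The only step needing genuine care is the depth bookkeeping, handled exactly as in Theorem~\ref{thm:thresholdPS}: a truth-table implementation of the new $\CONS{i}{j}$ predicate over three $O(\log n)$-bit labels gives only an $\AC{0}$ circuit, but the predicate merely adds and compares $O(\log n)$-bit integers, tasks in $\AC{0}$, hence computable in depth $O(\log\log n)$ with $O(1)$ alternations; the outer patching along a root-to-leaf path of length $O(\log n)$ adds a further $O(1)$ alternations and $O(\log\log n)$ depth, keeping the whole construction in $\polylog\AC{0}$. The main (mild) obstacle is not any single computation but verifying that \emph{exact} consistency together with the frontier/induction argument truly pins the output count to exactly $t$ rather than merely bounding it on one side; once that claim is stated cleanly it is routine. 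One could alternatively write $\EXACTCNT^n_t = \Th^n_t \cap \overline{\Th^n_{t+1}}$ and invoke Theorem~\ref{thm:thresholdPS} twice, but since $\polylog\AC{0}$ proof systems are not obviously closed under intersection, the direct modification above is the cleaner route.
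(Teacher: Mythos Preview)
Your proposal is close to the paper's, but the output rule you borrow from Theorem~\ref{thm:regularPS} makes the construction unsound for $\EXACTCNT$. You output, at position $k$, the witness of the \emph{lowest fully consistent ancestor} of the leaf, and then argue that ``the lowest fully consistent nodes partition $[n]$, over each frontier interval the output is exactly its witness.'' But the intervals of these nodes need not be disjoint: if the root is consistent, one child is inconsistent, and the other child is consistent, then the root is the lowest fully consistent ancestor of every leaf under the inconsistent child, while leaves under the consistent child have strictly lower such ancestors---so the root's interval properly contains other frontier intervals, and only part of the root's witness is ever used. Concretely, take $n=4$, $t=2$, $a=0010$, $\ell((0,2])=1$, $\ell((2,4])=1$. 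The root ($2=1+1$) and $(2,4]$ ($1=1+0$) are consistent, $(0,2]$ ($1\ne 0+0$) is not. Leaves $1,2$ have the root as their lowest fully consistent ancestor, so you output the first two bits of the root witness $1^20^2$, namely $11$; leaves $3,4$ are fully consistent and output $a_3a_4=10$. The result is $1110$, with three ones rather than two.

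The paper's rule instead uses the witness of the \emph{topmost inconsistent} node on the root-to-leaf path (i.e., the child of your lowest fully consistent ancestor), not the LFCA itself. Those nodes, together with the fully consistent leaves, genuinely partition $[n]$ into disjoint intervals, and a straightforward induction over fully consistent internal nodes shows their labels sum to exactly $t$. In the example above this yields $y_1y_2$ from the witness $1^10^1$ of $(0,2]$, giving $1010$. Everything else in your plan---equality for consistency, the witness shape $1^{\ell}0^{j-i-\ell}$, and the $O(\log\log n)$ depth analysis via $\AC{0}$ addition and comparison---matches the paper.
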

\begin{proof}
	We follow the same approach as Theorem~\ref{thm:thresholdPS}. We redefine {\bf
	consistent} as follows: 
	For any node $u$ of $T$, let $l(u)$ denote the label of $u$. A node $u$ with
children $v,w$ is consistent if $l(u)= l(v)+l(w)$. 
Let the output of our proof system be $y_1,\dots,y_n$. The
construction is as follows:
\begin{itemize}
\item If all nodes on the path from $(k-1,k]$ to the root in $T$ are
  consistent, then $y_k=a_k$.
\item Otherwise, let $u=(p,q]$ be the topmost node along the path from $(k-1,k]$ to
	the root that is not consistent. We output $y_k=1$ if $k-p\le l(u)$, $0$
	otherwise.
\end{itemize}
That is, for $u=(i,j]$ labeled $\ell(u)$, if $L=\min\{\ell(u),j-i\}$,
use feasibility witness  $1^{L}0^{j-i-L}$.
\qed\end{proof}

\section{$\TTAUT$, Reachability and $\NC{0}$ proof systems}
\label{sec:reach}
In this section, we first look at the language Undirected Reachability,
which is known to be in (and complete for) $\L$
(\cite{Rei2008}). Intuitively, the property of  connectivity is a global
one. However, viewing it from a different angle gives us a way to
construct an $\NC{0}$ proof system for it under
the standard adjacency matrix encoding (i.e., our proof system will output 
adjacency matrices of all graphs that have a path between $s$ and $t$,
and of no other graphs). In the process, we give an NC$^0$ proof system for the set of all
undirected graphs that are a union of edge-disjoint cycles.

Define the following languages:
\[
\USTCONN=\left\{ A \in \{0,1\}^{n \times n} |\ 
\parbox{0.6\textwidth}{
$A$ is the adjacency matrix of an undirected graph $G$ where vertices
  $s=1$, $t=n$ are in the same connected component.
}
\right\}
\]
\[
\CYCLES=\left\{ A \in \{0,1\}^{n \times n} |\ 
\parbox{0.6\textwidth}{
$A$ is the adjacency matrix of an undirected graph $G = (V,E)$ where 
$E$ is the union of edge-disjoint simple cycles.
}
\right\}
\]
(For simplicity, we will say $G \in \USTCONN$ or $G\in \CYCLES$ instead
of referring to the adjacency matrices. )

\begin{theorem}
\label{thm:ustconn}
The language $\USTCONN$ has an NC$^0$ proof system.
\end{theorem}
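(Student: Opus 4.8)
The plan is to reduce the problem of certifying $s$-$t$ connectivity to certifying a structurally simpler object, namely a graph that is a disjoint union of cycles, and then to invoke (and reuse the construction behind) Theorem~\ref{thm:cycles}. The key observation is that if $s$ and $t$ lie in the same connected component of an undirected graph $G$, then there is a \emph{path} $P$ from $s$ to $t$ in $G$, and a path is just a very restricted subgraph; the only obstacle to a path being a union of edge-disjoint cycles is that its two endpoints have odd degree. So I would have the proof system take, as (part of) its input, an alleged $s$-$t$ path together with extra edges that are forced to be present, and then "close up" the path into a cycle by adding the single edge $\{s,t\}$. Concretely: the proof contains a description of a walk/path from $s=1$ to $t=n$ (say, as a sequence of vertices, or as a $0/1$ indicator of edges on the path plus an ordering certificate), plus an arbitrary adjacency matrix $B$ for the "rest" of the graph. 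The circuit outputs the union of (a) the edges named by the path, (b) the edge $\{s,t\}$, and (c) whatever valid union-of-disjoint-cycles structure $B$ encodes, using the $\NC{0}$ mechanism from Theorem~\ref{thm:cycles} to locally correct $B$. Adding $\{s,t\}$ to a genuine $s$-$t$ path produces a cycle through both $s$ and $t$, hence $s,t$ are connected; and any output contains such a cycle, so soundness holds. Completeness: given any $G$ with $s,t$ connected, take a simple path, present it honestly, and let $B$ encode the empty graph.

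The technical heart is making step (a) work in $\NC{0}$: we must output the adjacency matrix of a genuine simple path from $s$ to $t$ — not just any walk — while each output bit depends on $O(1)$ proof bits. The natural encoding is to let the proof specify, for each position $i \in \{1,\dots,n\}$ along the path, the name $p_i \in [n]$ of the $i$-th vertex (with $p_1 = s$, and the path ending at $t$), using $O(\log n)$ bits per position; then edge $\{u,v\}$ of the output is set to $1$ iff some consecutive pair $(p_i,p_{i+1})$ equals $(u,v)$. But detecting "some consecutive pair equals $(u,v)$" is an OR over all $n$ positions, which is not $\NC{0}$. The fix, matching the style of Proposition~\ref{prop:old-upper}(3) and the structured-BP trick, is to have the proof instead \emph{directly} supply the adjacency-matrix bits of the path subgraph, together with a \emph{local certificate} that what is supplied really is a simple $s$-$t$ path: for each vertex $v$ the proof gives its "in-neighbour on the path" and "out-neighbour on the path" (two pointers, $O(\log n)$ bits), and the circuit locally checks, for each edge, that the two endpoints' pointers are mutually consistent with that edge being present, and at non-path vertices that both pointers are null, with $s$ having only an out-pointer and $t$ only an in-pointer. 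Each adjacency bit of the output then depends only on the four pointers of its two endpoints plus the supplied bit — that is $O(1)$ many $O(\log n)$-bit fields; but as in Theorem~\ref{thm:regularPS}, a single output \emph{bit} depends on only $O(1)$ of these bits once we recall that the proof system is measured in output bits of length $n$ (here $n^2$), and the circuit is allowed linear-in-output-length size. Here, crucially, since we need genuine $\NC{0}$ and not merely $\pAC{0}$, I would avoid equality-on-$\log n$-bits checks by instead having the circuit simply \emph{overwrite}: if the local pointer data at an edge is malformed, output $0$ there and instead route a corrected path, exactly the deterministic-local-correction philosophy of \cite{BDKMSSTV13}; the pointers are set up so that a single bad pointer can be zeroed out locally and the edge suppressed without destroying the rest of the path — this is the same phenomenon that makes strongly connected automata give $\NC{0}$ proof systems.

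I would structure the write-up as: (1) state the cycle-decomposition/disjoint-cycles proof system (Theorem~\ref{thm:cycles}) and its $\NC{0}$ construction; (2) show how to combine a locally-certified simple $s$-$t$ path with the closing edge $\{s,t\}$ to sit inside $\CYCLES \cup (\text{path} + \{s,t\})$, arguing that the presence of the closed-up cycle forces $s,t$ into one component (soundness) while honest proofs cover all of $\USTCONN$ (completeness); (3) verify locality: each of the $n^2$ output adjacency bits is a constant-size Boolean function of constantly many proof bits, using the pointer-overwrite trick so that no super-constant fanin is needed. The main obstacle I anticipate is precisely step~(3) — guaranteeing that the "is this a legal simple path" check decomposes into strictly local corrections with no global OR — and getting the interaction between the path-part and the cycle-part right at the shared vertices $s$ and $t$ (so that, e.g., the closing edge $\{s,t\}$ is not accidentally double-counted or forbidden by the disjoint-cycles constraint). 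Once the path subgraph is correctly produced in $\NC{0}$ and made vertex-disjoint from the cycle part of $B$ except possibly sharing vertices (not edges), plugging into Theorem~\ref{thm:cycles}'s construction is routine.
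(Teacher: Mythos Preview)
Your core intuition --- an $s$-$t$ path plus the edge $\{s,t\}$ is a cycle, so Theorem~\ref{thm:cycles} is the right tool --- matches the paper's. But two things break. First, completeness: your system always outputs the edge $\{s,t\}$ and forces the ``rest'' $B$ to be corrected into $\CYCLES$, so it can never produce a graph in $\USTCONN$ that lacks the edge $\{s,t\}$, nor one whose non-path edges fail to decompose into edge-disjoint cycles. Your completeness sentence (``let $B$ encode the empty graph'') outputs only the path together with $\{s,t\}$, not $G$ itself; a proof system must hit \emph{every} $G\in\USTCONN$. Second, the ``output a genuine simple $s$-$t$ path in $\NC{0}$'' step is never actually resolved --- the pointer scheme has each adjacency bit reading $\Theta(\log n)$ proof bits (the four pointers), which is $\pAC{0}$ at best, and the overwrite idea does not localise: zeroing one malformed edge can disconnect the alleged path, so soundness would fail too.

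The paper avoids both problems simultaneously, and the manoeuvre is worth internalising: never try to output a path. Take the $\CYCLES$ proof system from Theorem~\ref{thm:cycles} and \emph{XOR} (not union) the single output bit at position $(s,t)$; the range of this circuit is $L_1 = \{H \oplus (s,t) : H \in \CYCLES\}$, still $\NC{0}$. One checks $\Minterms(\USTCONN) \subseteq L_1 \subseteq \USTCONN$: a simple $s$-$t$ path XOR $\{s,t\}$ is a single cycle, and conversely in either XOR case $s$ and $t$ stay connected. Since $\USTCONN$ is monotone, Lemma~\ref{lem:upwardclosure} finishes the job by OR-ing the $L_1$ output bitwise with a fresh block of proof bits. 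The XOR (rather than union) at $(s,t)$ is precisely what lets you output graphs without that edge, and the final bitwise OR handles all remaining edges of $G$ with no structural constraint whatsoever --- so the whole ``certify a path locally'' difficulty simply evaporates.
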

\begin{proof}
We will need an addition operation on graphs: $G_1\oplus G_2$ denotes
the graph obtained by adding the corresponding adjacency matrices
modulo 2. We also need a notion of upward closure: For any language
$A$, $\Up(A)$ is the language $B = \{ y : \exists x \in A, |x|=|y|,
\forall i, x_i = 1 \implies y_i=1\}$.  In particular, if $A$ is a
collection of graphs, then $B$ is the collection of super-graphs
obtained by adding edges. Note that (undirected) reachability is
monotone and hence $\Up(\USTCONN) = \USTCONN$.

	Let $L_1=\{G=G_1 \oplus (s,t)| G_1\in \CYCLES \}$ and $L_2 =
        \Up(L_1)$. We show:
	\begin{enumerate}
		\item $L_2=\USTCONN$.
		\item If $L_1$ has an NC$^0$ proof system, then $L_2$ has an NC$^0$ proof system.
		\item If $\CYCLES$ has an NC$^0$ proof system, then $L_1$ has an NC$^0$ proof system.
		\item $\CYCLES$ has an NC$^0$ proof system.
	\end{enumerate}
	\textbf{Proof of 1}:  We show that $L_1 \subseteq
        \USTCONN \subseteq L_2$.  Then applying upward closure,
        $L_2 = \Up(L_1) \subseteq
        \Up(\USTCONN)=\USTCONN \subseteq \Up(L_2)=L_2$.  
        
        $L_1\subseteq \USTCONN$: Any graph $G\in L_1$ looks like $G=H\oplus
	(s,t)$, where $H\in$ $\CYCLES$. If $(s,t)\notin H$, then $(s,t)\in G$ and
	we are done. If $(s,t)\in H$, then $s$ and $t$ lie on a cycle
        $C$ and	hence removing the $(s,t)$ edge will still leave $s$ and $t$
	connected by a path $C\setminus \{(s,t)\}$. 

	$\USTCONN \subseteq L_2$:  Let $G\in \USTCONN$.
	Let $\rho$ be an $s$-$t$ path in $G$. Let $H=(V,E)$ be a graph such that
	$E=$ edges in $\rho$. Then, $G\in \Up(\{H\})$. We can write
        $H$ as $H'\oplus (s,t)$ where $H' = H \oplus (s,t) = \rho \cup
        (s,t)$; hence $H' \in \CYCLES$. Hence $H\in L_1$, and so $G\in L_2$.

\noindent \textbf{Proof of 2}: 
We show a more general construction for  monotone properties, and then
use it for $\USTCONN$.  

Recall that  a  function $f$ is monotone if whenever 
$f(x)=1$ and  $y$ dominates $x$ (that is,  $\forall i\in
[n]$, $x_i=1 \Rightarrow y_i=1$), then it also holds that
$f(y)=1$. For such a  function, a string $x$ is a {\em
  minterm} if $f(x)=1$ but $x$ does not dominate any $z$ with
$f(z)=1$. $\Minterms(f)$ denotes the set of all minterms of
$f$. Clearly, $\Minterms(f) \subseteq f^{-1}(1)$.
The following lemma states that for any monotone function $f$, constructing a
proof system for a language that 
sits in between $\Minterms(f)$ and $f^{-1}(1)$ suffices to
get a proof system for $f^{-1}(1)$.
\begin{lemma}
	Let $f:\{0,1\}^*\longrightarrow \{0,1\}$ be a monotone boolean function and
	let $L=f^{-1}(1)$. Let $L_n=L\cap \{0,1\}^n$. 
	Let $L'$ be a language such that for each length $n$,
	$(\mbox{Minterms}(L)\cap\{0,1\}^n)\subseteq (L'\cap \{0,1\}^n) \subseteq
	L_n$. If $L'$ has a proof system of
	depth $d$, size $s$ and $a$ alternations, then $L$ has a proof system
	of depth $d+1$, size $s+n$ and at most $a+1$ alternations.
	\label{lem:upwardclosure}
\end{lemma}

\begin{proof}
	Let $C$ be a proof circuit for $L'$ that takes input string $x$. 
	We construct a proof system for $L$
	using $C$ and asking another input string $y\in\{0,1\}^n$. The $i$'th
	output bit of our proof system is $C(x)_i\vee y_i$.
\qed\end{proof}

Now note that $\Minterms(\USTCONN)$ is exactly the set of graphs where
the edge set is a simple $s$-$t$ path. We have seen that $L_1
\subseteq \USTCONN$. As above, we can see that $H \in
\Minterms(\USTCONN) \implies H \oplus (s,t) \in \CYCLES \implies H \in
L_1$. Statement 2 now follows from Lemma~\ref{lem:upwardclosure}.

\noindent \textbf{Proof of 3}: Let $A$ be the adjacency matrix output by the
	the NC$^0$ proof system for $\CYCLES$. The proof system for $L_1$ outputs
	$A'$ such that $A'[s,t]=\overline{A[s,t]}$ and rest of $A'$ is same as
	$A$.

\noindent \textbf{Proof of 4}: This is of independent interest, and is
proved in theorem~\ref{thm:cycles} below.

This completes
the proof of theorem~\ref{thm:ustconn}. \qed
\end{proof}

We now construct $\NC{0}$ proof systems for the language $\CYCLES$. 
\begin{theorem}
\label{thm:cycles}
	The language $\CYCLES$ 
	has an NC$^0$ proof system.
\end{theorem}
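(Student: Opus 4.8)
The goal is an $\NC{0}$ proof system whose range is exactly the set of adjacency matrices of undirected graphs on $n$ vertices whose edge set is a disjoint union of simple cycles (equivalently, every vertex has even degree, and — importantly — no self-loops or multi-edges, since we work with $\{0,1\}^{n\times n}$). Recall that for $\NC{0}$ each output bit $A[i,j]$ must depend on $O(1)$ bits of the proof, so the proof must essentially hand us the matrix directly together with a tiny amount of local certification, and the circuit must be able to \emph{locally repair} any proof into a legal even-graph. The core difficulty is global: ``every degree is even'' is not a local property, yet we must guarantee it for every possible proof string.

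The plan is to design the proof so that the parity bookkeeping is carried along explicitly and only ever needs to be consulted locally. First I would fix a canonical linear order $1,\dots,n$ on the vertices and think of the cycle cover as being built edge-slot by edge-slot; the proof will contain, for each ordered pair $(i,j)$ with $i<j$, a candidate bit $e_{ij}$ (the alleged edge) together with auxiliary ``running parity'' bits. Concretely, for each vertex $v$ I would have the proof supply the sequence of partial parities $\pi_v^{(0)},\pi_v^{(1)},\dots,\pi_v^{(n)}$ of the degree of $v$ restricted to neighbours $\le k$, with the boundary conditions $\pi_v^{(0)}=0$ and $\pi_v^{(n)}=0$ \emph{hard-wired} (not taken from the proof). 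The output edge bit for the pair $\{u,v\}$ (say $u<v$) is then \emph{computed} from the two relevant transitions: we only place the edge if both $\pi_u$ and $\pi_v$ flip at the appropriate step in a mutually consistent way, and otherwise we zero it out. Because each consistency check involves only a constant number of proof bits (the two endpoints' parity bits at two consecutive indices, plus the candidate edge bit), each output bit depends on $O(1)$ proof bits, so the circuit is $\NC{0}$. Soundness (every output is an even-degree graph, hence a disjoint union of cycles, after also locally forbidding the diagonal so there are no loops) follows from the telescoping of the hard-wired boundary parities: whatever the proof says, the output degree of each $v$ equals $\pi_v^{(n)}\oplus\pi_v^{(0)}=0$. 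Completeness follows because, given any disjoint-cycle graph, we read off the true edges and the true running parities, all checks pass, and the circuit reproduces the graph.

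The step I expect to be the main obstacle is making the local edge-correction rule simultaneously (a) deterministic, (b) guaranteed to preserve the telescoped parity on \emph{both} endpoints no matter how adversarial the proof is, and (c) never introduce a multi-edge or loop — all while keeping the in-degree of information at each output bit constant. The subtlety is that the decision about slot $\{u,v\}$ is made by $u$ looking at its parity transition and by $v$ looking at \emph{its} parity transition, and these two views must be forced to agree; the trick (which is what Theorem~\ref{thm:ustconn} is actually invoking, via a careful decomposition of even-degree graphs into edge-disjoint cycles) is to route each edge's certification through a single shared bit so that a disagreement is resolved the same way on both sides. I would therefore spend the bulk of the proof (i) describing this shared-bit certification precisely, (ii) verifying the telescoping identity that gives soundness, (iii) handling the diagonal and the $i<j$/$i>j$ symmetry of the matrix so the output is a genuine symmetric $0$-diagonal matrix, and (iv) checking that the resulting even-degree graph is indeed a union of edge-disjoint simple cycles — which is exactly the graph-decomposition fact flagged as ``of independent interest'' in the statement of Theorem~\ref{thm:ustconn}, namely that a graph is a disjoint union of simple cycles if and only if every vertex has even degree.
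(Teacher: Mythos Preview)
Your running-parity scheme has a genuine gap at precisely the point you flag: the edge $\{u,v\}$ appears in two vertex-degree constraints, and no $O(1)$-local rule reconciles them on an adversarial proof. Take your stated rule, ``place the edge only if both $\pi_u$ and $\pi_v$ flip at the appropriate step, else zero it out.'' On $n=3$ with $\pi_1=(0,0,1,0)$, $\pi_2=(0,1,1,0)$, $\pi_3=(0,0,0,0)$ (all boundary conditions $\pi_v^{(0)}=\pi_v^{(n)}=0$ satisfied), the transitions are $\Delta_1=(0,1,1)$, $\Delta_2=(1,0,1)$, $\Delta_3=(0,0,0)$, and the rule outputs the single edge $\{1,2\}$ --- two odd-degree vertices. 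The telescoping you invoke yields $\bigoplus_{u\neq v}\Delta_v^{(u)}=0$, but $\bigoplus_{u\neq v}A[u,v]$ is no longer that sum once you have ANDed with the other endpoint's transition. Nor does any $\oplus$-based variant help: such a rule is a $\mathrm{GF}(2)$-linear map from proof bits to edge-space, and its image is contained in $\CYCLES$ only if every column (the effect of flipping one proof bit) is itself an even-degree graph. But flipping a single $\pi_v^{(k)}$ flips exactly the two edges $\{v,k\}$ and $\{v,k+1\}$, a path of length two with two odd endpoints, so the image is never inside $\CYCLES$. The ``single shared bit'' you allude to cannot repair this, because the obstruction is not agreement on the value of $A[u,v]$ but the global parity at $u$ and at $v$ separately.

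The paper's proof takes an entirely different, algebraic route. It directly constructs a spanning set $T$ for the cycle space of $K_n$ over $\mathrm{GF}(2)$, consisting of triangles, with the locality property that every edge lies in at most $6$ triangles of $T$: for each $i\ge 1$, all triangles whose three edge-lengths (differences of vertex labels in $[n]$) are $(i,i,2i)$ or $(i,i+1,2i+1)$. The proof string is a coefficient vector $(a_t)_{t\in T}$ and the output is $A[u,v]=\bigoplus_{t\in T:\,\{u,v\}\in t}a_t$, an XOR of at most $6$ bits, hence $\NC{0}$. Soundness is immediate since any $\oplus$-combination of triangles has all degrees even (this is where Veblen's theorem, which you correctly identify, is used). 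Completeness --- that $T$ spans the whole cycle space --- is proved by induction on the length of the longest edge present, peeling off one triangle of $T$ at a time to shorten it. The ``careful decomposition of even-degree graphs'' advertised in the introduction is this sparse triangle spanning set, not a parity-carrying chain.
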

\begin{proof}
	To design an NC$^0$ proof system for $\CYCLES$, we derive our intuition
	from algebra.

Let $\mc{T}$ be a family of graphs. We say that an edge $e$ is \emph{generated}
by a sub-family $\mc{S} \subseteq \mc{T}$ if the number of graphs in $\mc{S}$
which contain $e$ is odd. We say that the family $\mc{T}$ generates a graph $G$ if
there is some sub-family $\mc{S} \subseteq \mc{T}$ such that every edge in $G$ is
generated by $\mc{S}$, and no other edge is generated. We first observe that to generate every graph in the set
$\CYCLES$, we can set $\mc{T}$ to be the set of \emph{all} triangles. Given any
cycle, it is easy to come up with a set of traingles that generates the cycle;
namely, take any triangulation of the cycle. Therefore, if we let $\mc{T}$ be the
set of all triangles on $n$ vertices, it will generate every graph in
$\CYCLES$. Also, no other graph will be generated because any set $\mc{S}
\subseteq  \CYCLES$ generates a set contained in $\CYCLES$ (see Lemma~\ref{lem:cycles-closure} below). This immediately
gives a proof system for $\CYCLES$: given a vector $x \in \binom{n}{3}$, we will
interpret it as a subset $\mc{S}$ of triangles. We will output an edge $e$ if
it is a part of odd number of triangles in $\mc{S}$. Finally, because of the
properties observed above, any graph generated in this way will be a graph from
the set $\CYCLES$.

Unfortunately, this is not an $\NC{0}$ proof system because to decide
if an edge is generated, we need to look at 
$\Omega(n)$ triangles. For designing an $\NC{0}$ proof
system we need to come up with a set of triangles 
such that for any graph $G
\in \CYCLES$, every edge in $G$ is a part of $O(1)$ triangles.

So on the one hand, we want the set of triangles to generate every graph in
$\CYCLES$, and on the other hand we need that for any graph $G \in \CYCLES$, every
edge in $G$ is a part of $O(1)$ triangles. We show that such a set of triangles
indeed exists.

Thus our task now is to find a set of triangles $T\subseteq \CYCLES$ such that:
\begin{enumerate}
	\item Every graph in $\CYCLES$ can be generated using triangles from $T$.
		i.e., \[ \CYCLES \subseteq \Span(T) \triangleq \left\{\sum_{i=1}^{|T|}a_i t_i\ |\ \forall i,
		a_i\in \{0,1\}, t_i\in T\right\}\]
	\item Every graph generated from triangles in  $T$ is in $\CYCLES$;
          $\Span(T) \subseteq \CYCLES$.
	\item $\forall u,v\in [n]$, the edge $(u,v)$ is contained in at most
		6 triangles in $T$.
\end{enumerate}
	Once we find such a set $T$, then our proof system asks as input the
	coefficients $a_i$ which indicate the linear combination needed to
	generate a graph in $\CYCLES$. An edge $e$ is present in the output if, among
	the triangles that contain $e$, an odd number of them have
	coefficient set to $1$ in the input. By property 3, each output edge
	needs to see only constant many input bits and hence the circuit we
	build is NC$^0$. We will now find and describe $T$ in detail.

	Let the vertices of the graph be numbered from $1$ to $n$. Define the length of
	an edge $(i,j)$ as $|i-j|$. A triple
	$\langle i,j,k \rangle$ denotes the set of triangles on 
        vertices $(u,v,w)$ where $|u-v|=i$,
	$|v-w|=j$, and  $|u-w|=k$. We now define the set
	\[ T=\bigcup_{i=1}^{n/2}\langle i,i,2i \rangle \cup \langle i,i+1,2i+1\rangle \]
	\textbf{Observation} It can be seen that $|T|\le \frac{3}{2}n^2$. This
        is linear in the length of the output, which has ${n \choose
        2}$ independent bits.

	We now show that $T$ satisfies all properties listed earlier.\\
	\textbf{$T$ satisfies property 3}: Take any edge $e=(u,v)$. Let its
	length be $l=|u-v|$. $e$ can either be the longest edge in a triangle or one of the
	two shorter ones. If $l$ is even, then $e$ can be the longest edge
	for only $1$ triangle in $T$ and can be a shorter edge in at most $4$ triangles
	in $T$.	If $l$ is odd, then $e$ can be the longest edge for at most $2$
	triangles in $T$ and can be a shorter
	edge in at most $4$ triangles. Hence, any edge is contained in at most
	$6$ triangles.
\noindent \textbf{$T$ satisfies property 2}: 
To see this, note first that $T \subseteq \CYCLES$.  Next, observe the
following closure property of cycles: 
\begin{lemma} For any $G_1,G_2\in$ $\CYCLES$, the graph $G_1\oplus
	G_2\in $ $\CYCLES$.
	\label{lem:cycles-closure}
\end{lemma}
\begin{proof}
        A well-known fact about connected graphs is that they are
        Eulerian if and only if every vertex has even degree. The
        analogue for general (not necessarily connected) graphs is
        Veblen's theorem \cite{Veb1912}, which states that
        $G\in\CYCLES$ if and only if every vertex in $G$ has even
        degree.

        Using this, we see that if for $i \in [2]$, $G_i \in \CYCLES$
        and if we add the adjacency matrices modulo 2, then degrees of
        vertices remain even and so the resulting graph is also in $\CYCLES$. 
\qed\end{proof}
It follows that $\Span(T)\subseteq \CYCLES$. 

\noindent \textbf{$T$ satisfies property 1}: 
		We will show that any graph $G\in\CYCLES$ can be written 
		as a linear combination of triangles in $T$.
	Define, for a graph $G$, the parameter $d(G)=(l,m)$ where $l$ is the
	length of the longest edge in $G$ and $m$ is the number of edges in $G$
	that have length $l$. For graphs $G_1,G_2\in \CYCLES$, with
	$d(G_1)=(l_1,m_1)$ and $d(G_2)=(l_2,m_2)$, we say $d(G_1)<d(G_2)$ if and
	only if either $l_1<l_2$ holds or $l_1=l_2$ and $m_1<m_2$. Note that for
	any graph $G\in\CYCLES$ with $d(G)=(l,m)$, $l\ge 2$.

	\begin{claim}
		Let $G\in \CYCLES$. If $d(G)=(2,1)$, then $G\in T$.
		\label{claim:decompbasecase}
	\end{claim}
	\begin{proof}
		It is easy to see that $G$ has to be a triangle with edge
		lengths $1,1$ and $2$. All such triangles are contained in $T$
		by definition.
	\qed\end{proof}

	\begin{lemma}
		For every $G \in\CYCLES$ with $d(G)=(l,m)$, either $G\in T$ or 
		there is a $t \in T$, and $H \in\CYCLES$ such that $G=H\oplus t$
		and $d(H)<d(G)$.
		\label{lem:decompstep}
	\end{lemma}
	\begin{proof}
		If $G\in T$, then we are done. So now consider the
                case when 
		$G\notin T$:

	Let $e$ be a longest edge in $G$. Let $C$ be the cycle which contains
	$e$. Pick $t\in T$ such that $e$ is the longest edge in $t$.
		$G$ can be written as $H\oplus t$ where $H=G\oplus t$.
		From Lemma~\ref{lem:cycles-closure} and since $T\subseteq
                \CYCLES$, we know that $H\in \CYCLES$. Let $t$ have the edges
		$e,e_1,e_2$. Any edge present in both $G$
		and $t$ will not be present in $H$. Since $e\in G\cap t$,
		$e\notin H$. Length of $e_1$ and $e_2$ are both less than $l$
		since $e$ was the longest edge in $t$. Hence the number of times
		an edge of length $l$ appears in $H$ is reduced by $1$ and the
		new edges added(if any) to $H$ (namely $e_1$ and $e_2$) have
		length less then $l$. Hence if $m>1$, then $d(H)=(l,m-1)< d(G)$.
		If $m=1$, then $d(H)=(l',m')$ for some $m'$ and $l'<l$, and hence
		$d(H)<d(G)$.
	\qed\end{proof}
	
	By repeatedly applying Lemma~\ref{lem:decompstep}, we can obtain the exact
	combination of triangles from $T$ that can be used to give any $G\in 
	\CYCLES$. A more formal proof will proceed by induction on the parameter
	$d(G)$ and each application of Lemma \ref{lem:decompstep} gives a graph $H$
	with a $d(H)<d(G)$ and hence allows for the induction hypothesis to be
	applied. The base case of the induction is given by Lemma
	\ref{claim:decompbasecase}. Hence $T$ satisifes property 1. 

Since $T$ satisfies all three properties, we obtain an $\NC{0}$ proof
system for $\CYCLES$, proving the theorem.
\qed\end{proof}

The above proof does not work for directed $\REACH$.
However, we can show that directed un-reachability can be captured by $\NC{0}$ proof
systems.
\begin{theorem}
	\label{thm:stunreach}
The language $\UNREACH$  defined below has an $\NC{0}$ proof
	system under the standard adjacency matrix encoding.
\[
\UNREACH=\left\{ A \in \{0,1\}^{n \times n} |\ 
\parbox{0.6\textwidth}{
$A$ is the adjacency matrix of a directed graph $G$ with no path from 
  $s=1$ to $t=n$.
}
\right\}
\]
\end{theorem}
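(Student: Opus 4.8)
The plan is to certify unreachability with the classical vertex cut: a directed graph $G$ on vertex set $[n]$ has no path from $s=1$ to $t=n$ if and only if there is a set $S\subseteq[n]$ with $s\in S$, $t\notin S$, and no edge of $G$ going from $S$ to $[n]\setminus S$. Necessity comes from taking $S$ to be the set of vertices reachable from $s$; sufficiency from the observation that anything reachable from $s$ must stay inside such an $S$. The crucial point is that, \emph{given} the set $S$, verifying the ``no crossing edge'' condition is completely local: the edge $(u,v)$ is forbidden exactly when $u\in S$ and $v\notin S$, which depends only on the membership of $u$ and the membership of $v$.

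So the $\NC{0}$ proof system takes as input an alleged adjacency matrix $A\in\{0,1\}^{n\times n}$ together with $n$ extra bits $c_1,\dots,c_n$, read as the indicator of a set $S=\{u : c_u=1\}$, where $c_1$ is hardwired to $1$ and $c_n$ is hardwired to $0$ (so that $s\in S$ and $t\notin S$ always hold). The output matrix $A'$ is given by $A'[u,v]=A[u,v]\wedge\overline{\,c_u\wedge\overline{c_v}\,}$: keep each edge of $A$ unless it crosses the cut from $S$ to its complement, in which case delete it. Each output bit depends only on the three bits $A[u,v]$, $c_u$, $c_v$ (some of which may be hardwired constants), so the circuit is $\NC{0}$ and of linear size in the output length $n^2$. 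For soundness, note that on \emph{any} input the output $A'$ has no edge from $S$ to $[n]\setminus S$; since $s\in S$ and $t\notin S$, every vertex reachable from $s$ in $A'$ lies in $S$, so $t$ is not reached and $A'\in\UNREACH$. For completeness, given $G\in\UNREACH$ let $S$ be the set of vertices reachable from $s$ in $G$; then $t\notin S$ and $G$ has no edge leaving $S$, so the proof consisting of $G$'s adjacency matrix together with the indicator of this $S$ is mapped to $G$ unchanged. (When $n=1$ we have $s=t$, hence $\UNREACH^{=1}=\emptyset$ and no circuit is needed.)

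There is no real obstacle here: the construction is a one-line local correction, and the only thing to get right is that the vertex-cut certificate is simultaneously necessary and sufficient, so that the specific $S$ used in the completeness argument triggers exactly the ``no correction needed'' case, while an arbitrary $S$ in the soundness argument still forces membership in $\UNREACH$.
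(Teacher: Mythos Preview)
Your proof is correct and essentially identical to the paper's: both use a hardwired $s,t$-cut indicator vector and delete exactly the edges crossing the cut, with the same soundness and completeness arguments. The only cosmetic difference is that you write the output bit as an explicit Boolean formula $A[u,v]\wedge\overline{c_u\wedge\overline{c_v}}$, whereas the paper states the same condition in prose.
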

\begin{proof}
	As proof, we take as input an adjacency matrix $A$ and an $n$-bit vector
	$X$ with $X(s)=1$ and $X(t)=0$ hardwired. Intuitively, $X$ is like a
	characteristic vector that represents all vertices that can be reached
	by $s$. 

	The  adjacency matrix $B$ output by our proof system is: 
\[
        B[i,j] = \left\{ 
        \textrm{
       \begin{tabular}{rl}
          1 & if 	$A[i,j]=1$ and it is not the case that
          $X(i)=1$ and $X(j)=0$, \\
          0 & otherwise
        \end{tabular}
       }
       \right.
\]

\noindent
	Soundness: No matter what $A$ is, $X$ describes an $s,t$ cut since $X(s)=1$
	and $X(t)=0$. So any gaph output by the proof system will not have a
	path from $s$ to $t$.

\noindent	Completeness: For any $G\in \UNREACH$, use the adjacency matrix
	of $G$ as $A$ and give input $X$ such that $X(v)=1$ for a vertex $v$
        if and only if $v$ is reachable from $s$.
\qed\end{proof}

\section{Pushing the Bounds}
\label{sec:push}

We know that any language in $\NP$ has $\AC{0}$ proof
systems. Srikanth Srinivasan recently showed that $\AC{0}$, or more
precisely $\Omega(\log n)$ depth in a bounded fanin model, is
necessary for some languages in $\NP$.
We sketch his proof below. 
\begin{theorem}[Srikanth Srinivasan (private communication)]
\label{thm:ac0-necessary}
There is a  language $A$ in $\NP$ such that any bounded-fanin proof
system for $A$ needs $\Omega(\log n)$ depth. 
\end{theorem}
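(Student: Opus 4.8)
\emph{Proof idea.}
The plan is to let $A$ be an explicit family of asymptotically good binary error-correcting codes and to argue that the error-correction built into such a code cannot be carried out by a proof circuit whose output bits each read only few input bits. Fix constants $\rho,\delta>0$, let $\mathcal C_n\subseteq\{0,1\}^n$ be a code of rate at least $\rho$ and relative minimum distance at least $\delta$ that is recognizable in polynomial time (such families exist explicitly, e.g.\ expander codes or Justesen codes), and set $A=\bigcup_n \mathcal C_n$, so that $A^{=n}=\mathcal C_n$ and $A\in\P\subseteq\NP$.

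Next I would prove the lower bound. Suppose $\{C_n\}$ is any bounded-fanin proof system for $A$ of depth $d=d(n)$. Since each output gate is the root of a fanin-$2$ formula of depth $d$, the $i$-th output bit depends on a set $S_i$ of at most $k:=2^d$ input variables. For an input variable $x_j$ set $T_j=\{\,i : j\in S_i\,\}$; then $\sum_j |T_j|=\sum_i |S_i|\le nk$, so at most $k/\delta$ variables $x_j$ have $|T_j|\ge \delta n$. Call these variables \emph{heavy} and the rest \emph{light}. The crux is that flipping one light variable cannot change the output: if $x'$ agrees with $x$ except at a light $x_j$, then $C_n(x)$ and $C_n(x')$ agree outside $T_j$, hence differ in fewer than $\delta n$ coordinates, and since both lie in $\mathcal C_n$, whose minimum distance is at least $\delta n$, they must be equal. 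Hence $C_n(x)$ depends only on the at most $k/\delta$ heavy variables, so $|A^{=n}| = |\mathrm{range}(C_n)|\le 2^{k/\delta}$. Combining with $|A^{=n}|\ge 2^{\rho n}$ gives $2^d=k\ge \rho\delta\,n$, i.e.\ $d=\Omega(\log n)$.

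I expect the only genuinely delicate point to be choosing the code family correctly: we need a single language $A$, nonempty at every length, whose slices are simultaneously poly-time recognizable (so that $A\in\NP$) and asymptotically good (so that minimum distance and dimension are both $\Theta(n)$). Any standard explicit construction of good codes supplies this, so this is a matter of citing the right construction rather than a real obstacle. Everything else is the short distance/counting argument above; it uses neither linearity of the code nor uniformity of the circuits, so the same bound rules out even non-uniform bounded-fanin proof systems, which dovetails with the fact (recalled earlier) that unbounded-fanin constant-depth ($\AC{0}$) proof systems already suffice for all of $\NP$.
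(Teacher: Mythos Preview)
Your proposal is correct and follows essentially the same approach as the paper: take $A$ to be an explicit asymptotically good error-correcting code (the paper cites Justesen), note that each output bit depends on at most $2^d$ input bits, and use the minimum distance to argue that an input bit touching fewer than $\delta n$ outputs cannot change the output codeword. The only cosmetic difference is in the finish: the paper picks a single non-degenerate input with few output connections (found by averaging, using $m\ge\Omega(n)$ from the rate) and derives a direct contradiction, whereas you classify inputs as heavy/light and bound the range by $2^{k/\delta}$; both routes give $2^d=\Omega(n)$.
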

\begin{proof}
	Let $A\subseteq \{0,1\}^n$ be an error correcting code of constant rate
	and linear distance that can be efficiently computed. Such codes are
	known to exist. See for example \cite{Jus72}.
	Suppose there is a proof system $C_n:\{0,1\}^m\longrightarrow \{0,1\}^n$
	of depth $d$ that outputs exactly the strings in $A$. 
	Assume that $C$ is non-degenerate. i.e., for every 
	input position $i$, $\exists x\in\{0,1\}^m$ such that $C(x)\neq C(x\oplus e_i)$.
	Note that $m\geq n$ since $A$ is constant rate
	($|A\cap\{0,1\}^n|=2^{\Omega(n)}$). Note that each output bit is a
	function of at most $2^d$ input bits. By an
	averaging argument, there exists an input position $i$ such that
	$x_i$ is connected to at most $2^d$ output positions. For this $i$, let
	$x$ be an input such that $C(x)\neq C(x\oplus e_i)$. But since $C(x)$
	and $C(x\oplus e_i)$ are both codewords in $A$, they must differ in at
	least $2^{\Omega(n)}$ positions since $A$ is has linear distance. This
	implies that $x_i$ is connected to at least $2^{\Omega(n)}$ output
	positions and so $d=\Omega(\log n)$.
\qed\end{proof}

However, we note that proof systems for a big fragment of $\NP$ do not require the full power of
$\AC{0}$. In particular, for every language in $\NP$, an extremely
simple padding yields another language with simpler proof systems.
\begin{theorem}
\label{thm:SAC}
Let $L$ be any language in $\NP$.
\begin{enumerate}
\item
If $L$ contains $0^*$, then $L$ has a proof system where negations
appear only at leaf level, $\wedge$ gates have unbounded fanin, $\vee$
gates have $O(1)$ fanin, and the depth is $O(1)$.  That is, $L$ has a
$\co\SAC{0}$ proof system.
\item
If $L$ contains $1^*$, then $L$ has a proof system where negations
appear only at leaf level, $\vee$ gates have unbounded fanin, $\wedge$
gates have $O(1)$ fanin, and the depth is $O(1)$.  That is, $L$ has an
$\SAC{0}$ proof system.
\item The language $(\{1\} \cdot L \cdot \{0\}) \cup 0^* \cup 1^*$ has
  both $\SAC{0}$ and $\co\SAC{0}$ proof systems.
\end{enumerate}
\end{theorem}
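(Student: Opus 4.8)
The plan is to refine the standard fact that every $\NP$ language has an $\AC{0}$ proof system into one that keeps the fanin bounded on one side. The tool I will use is this sharpened form of Cook's tableau construction: for any $L\in\NP$ there is a polynomial $q$ and a family of polynomial-size CNF formulas $\{F_n\}$, each clause of width $O(1)$ (the constant depending only on $L$), over variables $x_1,\dots,x_n$ and $w_1,\dots,w_{q(n)}$, with $x\in L^{=n}$ iff $\exists w\in\{0,1\}^{q(n)}$ such that $F_n(x,w)=1$. Here $w$ packages an $\NP$-certificate for $x$ together with the full computation tableau of the verifier run on $(x,\text{certificate})$, and $F_n$ is the conjunction of the local consistency checks of that tableau, each touching only $O(1)$ cells; since the verifier machine is fixed, each such check is an $O(1)$-width CNF, so $F_n$ has constant clause width. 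Dually, $\neg F_n$ is a polynomial-size DNF $\bigvee_j T_j$ with every term $T_j$ of width $O(1)$.

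For part~1 I take the proof-system input to be a pair $(x,w)\in\{0,1\}^n\times\{0,1\}^{q(n)}$ and set the $i$-th output bit to $y_i = x_i\wedge F_n(x,w)$. Soundness: if $F_n(x,w)=1$ the output is $x$, which lies in $L^{=n}$ by the certificate inside $w$; if $F_n(x,w)=0$ the output is $0^n$, which lies in $L$ because $L\supseteq 0^*$. Completeness: for $y\in L^{=n}$ pick $w$ with $F_n(y,w)=1$; then $(y,w)$ maps to $y$. For the circuit: $y_i$ is the conjunction of the unit clause $x_i$ with all the $O(1)$-width clauses of $F_n$, hence is itself a polynomial-size CNF of constant clause width --- negations at the leaves, one layer of $O(1)$-fanin $\vee$ gates, then one unbounded-fanin $\wedge$ gate --- which is exactly a depth-$2$ $\co\SAC{0}$ circuit. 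Part~2 is the mirror image: output $y_i = x_i\vee\neg F_n(x,w)$, so that $F_n(x,w)=0$ yields $1^n\in L$ (using $L\supseteq 1^*$); now $y_i$ is the disjunction of the unit term $x_i$ with the $O(1)$-width terms of $\neg F_n$, a polynomial-size DNF of constant term width, i.e. a depth-$2$ $\SAC{0}$ circuit.

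Part~3 then follows with no extra work: $L'=(\{1\}\cdot L\cdot\{0\})\cup 0^*\cup 1^*$ is in $\NP$ (one guesses the decomposition and either checks the end symbols and that the middle block is in $L$, or checks membership in the regular set $0^*\cup 1^*$), and $L'$ contains both $0^*$ and $1^*$; so part~1 gives it a $\co\SAC{0}$ proof system and part~2 gives it an $\SAC{0}$ proof system. At every length the relevant trivial word ($0^n$ for part~1, $1^n$ for part~2) lies in $L'$, so neither construction runs into an empty-image issue. The role of the padding by $1\cdot(\cdot)\cdot 0$ is precisely to inject both $0^*$ and $1^*$ into the language while keeping an isomorphic copy of the arbitrary $\NP$ language $L$ inside it.

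The only real content beyond bookkeeping is the constant clause width in the CNF/DNF characterization: this is exactly what keeps the $\vee$-fanin bounded in part~1 and the $\wedge$-fanin bounded in part~2, and it is where a naive "truth table on the verifier's whole computation" would fail. The remaining points --- that the input length $n+q(n)$ is polynomial, that absorbing $x_i$ as a unit clause (resp. unit term) adds no alternation, and soundness/completeness --- are routine.
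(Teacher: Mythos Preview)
Your proof is correct and follows essentially the same approach as the paper. The paper phrases the local checks as gate-value consistency ($z_i = z_j \circ z_k$ for each gate of an $\NP$-verifier circuit, with the gate values $z$ supplied as part of the proof) while you phrase them via the Cook--Levin tableau CNF, but these are the same construction: in both cases the key point is that each local check depends on $O(1)$ bits, so the resulting $x_i \wedge F_n$ (resp.\ $x_i \vee \neg F_n$) is a constant-clause-width CNF (resp.\ constant-term-width DNF), i.e.\ $\co\SAC{0}$ (resp.\ $\SAC{0}$).
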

\begin{proof}
Let $L$ be a language in $\NP$. Then there is a family of uniform
polynomial-sized circuits $(C_n)$, where each $C_n$ has $q(n)$ gates,
$n$ standard inputs $x$ and $p(n)$ auxiliary inputs $y$, such that for
each $x\in\{0,1\}^n$, $x\in L \Longleftrightarrow \exists y:
C_n(x,y)=1$.  We use this circuit to construct the proof system. The
input to the proof system consists of words $x=x_1 \ldots x_n$, $y=y_1
\ldots y_{p(n)}$, $z=z_1 \ldots z_{q(n)}$. The intention is that $y$
represents the witness such that $C_n(x,y)=1$, and $z$ represents the
vector of values computed at each gate of $C_n$ on input $x,y$. There
are two ways of doing self-correction with this information:
\begin{itemize}
\item Check for consistency: Check that every gate $g_i = g_j \circ
  g_k$ satisfies $z_i = z_j \circ z_k$.  Output the string $w$ where
  $\langle w\rangle = \langle x\rangle \wedge (\bigwedge_{i=1}^{q(n)}
  [z_i = z_j \circ z_k])$. If even one gate is inconsistent, $w$
  equals $0^*$, otherwise $w$ is the input $x$ that has been certified
  by $y,z$; hence $w$ is in $L \cup 0^*$.  Every string in $L$ can be
  produced by giving    witness $y$ and consistent $z$. The expression shows that this
  is a $\co\SAC{0}$ circuit.
\item Look for an inconsistency: Find a gate $g_i = g_j \circ g_k$
  where $z_i \neq z_j \circ z_k$.  Output the string $w$ where
  $\langle w\rangle = \langle x\rangle \vee (\bigvee_{i=1}^{q(n)} [z_i
    \neq z_j \circ z_k])$. If even one gate is inconsistent, $w$
  equals $1^*$, otherwise $w$ equals the input $x$ that has been
  certified by $y,z$; hence $w$ is in $L \cup 1^*$.  Every string in
  $L$ can be produced by giving suitable $y,z$. The expression shows
  that this is an $\SAC{0}$ circuit.
\end{itemize}
\qed\end{proof}

Ideally, we would like to have a notion of a reduction $\le$ such that
if $A \le B$ and if $A$ needs $\Omega(d)$ depth in proof systems, then
so does $B$. Such a notion was implicitly used in proving
Theorem~\ref{thm:ustconn}; we showed that a lower bound for $\CYCLES$
translated to a lower bound for $\USTCONN$.  However, part 3 of
theorem~\ref{thm:SAC} suggests that for $\NC{0}$ proof systems in
general, such ``reductions'' are necessarily rather fragile, and we do
not yet see what is a reasonable and robust definition to adopt. Using
some reduction-like techniques, we can give depth lower bounds for
proof systems for some more languages. We collect some such results in
Lemma~\ref{lem:GI} below; all start from the hardness of $\MAJ$.

Using Lemma~\ref{lem:upwardclosure} and the known lower bound for
$\MAJ$ from \cite{BDKMSSTV13}, we can show that 
the following languages have no $\NC{0}$ proof systems: 
\begin{lemma}
\label{lem:GI}
The following languages do not have $\NC{0}$ proof systems.
\begin{enumerate}
\item $\EXMAJ$, consisting of strings $x$ with exactly $\lceil |x|/2 \rceil$
1s. 
\item $\EQONES=\{ xy \mid x,y\in \{0,1\}^*, |x|=|y|, |x|_1 = |y|_1 \}$.
\item $\GI = \{G_1,G_2 \mid \text{Graph $G_1$ is isomorphic to graph
  $G_2$}\}$. \\
Here we assume that $G_1$ and $G_2$ are specified via
  their 0-1 adjacency matrices, and that 1s on the diagonal are
  allowed (the graphs may have
  self-loops).
\end{enumerate}
\end{lemma}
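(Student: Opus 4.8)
The plan is: from a hypothetical $\NC{0}$ proof system for any one of $\EXMAJ$, $\EQONES$, $\GI$, build an $\NC{0}$ proof system for $\MAJ$, contradicting Proposition~\ref{prop:old-lower}(1). None of the three languages is itself monotone, so Lemma~\ref{lem:upwardclosure} cannot be applied to them directly; instead we first transform the proof-system output by a cheap $\NC{0}$ post-processing into a proof system for a monotone threshold language, invoke Lemma~\ref{lem:upwardclosure} there, and finally reconcile lengths. The one gadget needed repeatedly is this: if $C_n$ is a proof system for a language $K$, then the circuit whose $i$-th output is $C_n(\cdot)_i \vee C_n(\cdot)_n$ (for $i \le n-1$) is an $\NC{0}$ proof system for $\{\, u \in \Bset{n-1} \mid u0 \in K^{=n}\,\} \cup \{1^{n-1}\}$ --- informally, ``drop the last output bit, but emit all-ones whenever that bit was $1$''. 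Instantiated suitably this turns a proof system for $\Th^{2m}_{m}$ into one for $\MAJ^{=2m-1}$, and one for $\MAJ^{=n}$ into one for $\MAJ^{=n-1}$; the latter lets us pass from proof systems for $\MAJ$ at odd lengths to proof systems at all lengths.

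For $\EXMAJ$: when $n=2r+1$ is odd, $\EXMAJ^{=n}=\EXACTCNT^{n}_{r+1}$ is exactly $\Minterms(\MAJ^{=n})$. Taking $f$ to be the monotone function ``$|x|$ is odd and $x\in\MAJ$'' and $L'$ the restriction of $\EXMAJ$ to odd lengths (so $L'$ and $\Minterms(f)$ are empty at even lengths), we have $\Minterms(f)\subseteq L'\subseteq f^{-1}(1)$ at every length. A proof system for $L'$ is obtained from one for $\EXMAJ$ by discarding the even-length circuits (which $L'$ does not require), so Lemma~\ref{lem:upwardclosure} yields an $\NC{0}$ proof system for $f^{-1}(1)$, i.e.\ for $\MAJ$ at all odd lengths. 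The gadget above, applied at lengths $2r+1$, supplies proof systems for $\MAJ^{=2r}$ as well, hence a proof system for all of $\MAJ$ --- a contradiction.

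For $\EQONES$ and $\GI$: composing an $\NC{0}$ proof system for $\EQONES$ with the $\NC{0}$ map that copies the first half $x$ of the output and complements the second half $y$ bitwise yields, since $|x|_1=|y|_1$ forces $|x\overline{y}|_1=m$ (and every length-$2m$ string with exactly $m$ ones arises this way), an $\NC{0}$ proof system for $\EXACTCNT^{2m}_{m}=\Minterms(\Th^{2m}_{m})$; Lemma~\ref{lem:upwardclosure} then gives one for $\Th^{2m}_{m}$, and the gadget turns that into a proof system for $\MAJ^{=2m-1}$, so again we reach all of $\MAJ$. For $\GI$, restrict attention to pairs of graphs whose only edges are self-loops (every off-diagonal entry $0$): such a pair is isomorphic precisely when the two diagonals carry the same number of $1$s, because an isomorphism must match self-loops to self-loops. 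Consequently, the circuit obtained from any $\NC{0}$ proof system for $\GI$ by keeping only the $2n$ output wires that hold the two diagonals is an $\NC{0}$ proof system for $\EQONES^{=2n}$ (completeness uses that any two loop-only graphs with equally many loops are isomorphic, and here it is essential that $1$s on the diagonal are permitted). This reduces $\GI$ to the previous case.

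The chief difficulty is conceptual rather than computational: a proof system enumerates its whole language, so one may not simply ``restrict'' it to a convenient sub-family --- each reduction must be realised as an $\NC{0}$ post-processing of the entire output (the diagonal projection for $\GI$, the one-sided complement for $\EQONES$, the last-bit gadget for the threshold and parity bookkeeping). One must also ensure that the chain lands on $\MAJ$ itself rather than on a threshold near the middle at the ``wrong'' parity, for which Proposition~\ref{prop:old-lower}(1) is not literally stated, and check that Lemma~\ref{lem:upwardclosure} and Definition~\ref{def:ps} are honoured at every length (including the empty ones). The remaining range computations are routine once the gadgets are in place.
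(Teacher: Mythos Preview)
Your proof is correct and follows essentially the same route as the paper: the chain $\GI \to \EQONES \to$ (exact-half-count) $\to \MAJ$ via Lemma~\ref{lem:upwardclosure}, using the diagonal projection for $\GI$ and the bitwise complement of the second half for $\EQONES$. Your length-parity bookkeeping (the last-bit gadget that passes from $\Th^{2m}_m$ to $\MAJ^{=2m-1}$ and from $\MAJ^{=n}$ to $\MAJ^{=n-1}$) is in fact more careful than the paper's argument, which simply asserts $\Minterms(\MAJ)=\EXMAJ$ and $\MAJ=\Up(\EXMAJ)$ without separating even and odd lengths.
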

\begin{proof}

\begin{enumerate}
\item To show that $\EXMAJ$ does not have  $\NC{0}$
	proof systems, note that:
	\begin{itemize}
		\item The language $\MAJ$ does not have $\NC{0}$ proof systems
			(See \cite{BDKMSSTV13}). 
		\item $\Minterms(\MAJ) = \EXMAJ$; $\MAJ = \Up(\EXMAJ)$. 
		\item Lemma \ref{lem:upwardclosure} now implies
                  $\EXMAJ$ does not have an $\NC{0}$ proof system. 
	\end{itemize}
By the same argument, $\EXMAJ$ restricted to even-length strings, call
it $\EXMAJEVEN$, has
no $\NC{0}$ proof systems. 
\item 	We will show that if $\EQONES$ has an $\NC{0}$ proof system, then so does the
	language $\EXMAJEVEN$.
Consider the slice 
\[\EQONES^{=2n} = \{ xy \mid |x|=|y| = n;
  \text{~~$x$ and $y$ have an equal number of 1s}\}.\] 
If $x,y$ are
  length-$n$ strings, then $xy\in \EQONES^{=2n}$ if and only if $xy'\in
  \EXMAJEVEN$, where $y'$ is the bitwise complement of $y$. Thus a
  depth $d$ proof system for $\EQONES$ implies a depth $d+1$ proof system
  for $\EXMAJEVEN$.

\item Let $G_1,G_2$ be $n$-node isomorphic graphs with adjacency
  matrices $A_1,A_2$. Then $(A_1,A_2)$ is in $\GI^{=2n^2}$. Let $y_b$
  be the string appearing on the diagonal of $A_b$. Then $y_1y_2 \in
  \EQONES^{=2n}$.

Conversely, for each $xy \in \EQONES^{=2n}$ where $|x|=|y|=n$, the pair
$(\Diag(x),\Diag(y))$ is in $\GI^{=2n^2}$. (For an $n$-bit vector $w$,
$\Diag(w)$ is the $n\times n$ matrix with $w$ on the diagonal and
zeroes elsewhere.)

Thus a depth $d$ proof system for $G$ implies a depth $d$ proof
system for $\EQONES$.
\end{enumerate}
\qed\end{proof}

\section{Discussion}
\label{sec:discussion}

For $\MAJ$, we have given a proof system with $O(\log \log n)$ depth
(and $O(1)$ alternations), and it is known from \cite{BDKMSSTV13} that
$\omega(1)$ depth is needed. Can this gap between the upper and lower
bounds be closed? 

Can we generalize the idea we use in Theorem \ref{thm:ustconn} and
apply it to other languages? In particular, can we obtain good upper bounds using this
technique for the language of $s$-$t$ connected directed graphs? From
the results of \cite{BDKMSSTV13} and this paper, we know languages
complete for $\NC{1}$, $\L$, $\P$ and $\NP$ with $\NC{0}$ proof
systems. A proof system for $\REACH$ would bring  $\NL$ into this
list.

Our construction from Theorem~\ref{thm:regularPS} can be generalized to work for languages accepted
by growing-monoids or growing-non-uniform-automata with
poly-log growth rate (see eg \cite{BLM93}). Can we obtain good upper bounds for
linearly growing automata?

In \cite{KL12}, proof systems computable in DLOGTIME are
investigated. The techniques used there seem quite different from
those that work for small-depth circuits, especially $\pAC{0}$. Though
in both cases each output bit can depend on at most $\poly \log n$
input bits, the circuit can pick an arbitrary set of $\poly \log n$
bits whereas a DLOGTIME proof system needs to write the index of each
bit on the index tape using up $\log n$ time.

\bibliography{biblio}

\begin{thebibliography}{10}

\bibitem{Cook71}
Cook, S.A.:
\newblock The complexity of theorem-proving procedures.
\newblock In: Proceedings of the annual ACM symposium on Theory of Computing.
  (1971)  151--158

\bibitem{Hir10}
Hirsch, E.A.:
\newblock Optimal acceptors and optimal proof systems.
\newblock In: Proceedings of 7th Annual Conference on Theory and Applications
  of Models of Computation TAMC, Springer (2010)

\bibitem{HI10}
Hirsch, E.A., Itsykson, D.:
\newblock On optimal heuristic randomized semidecision procedures, with
  application to proof complexity.
\newblock In: Proceedings of 27th International Symposium on Theoretical
  Aspects of Computer Science, STACS. (2010)  453--464

\bibitem{CK07}
Cook, S.A., Kraj\'{\i}\v{c}ek, J.:
\newblock Consequences of the provability of {NP $\subseteq$ P/poly}.
\newblock Journal of Symbolic Logic \textbf{72}(4) (2007)  1353--1371

\bibitem{Pudlak09}
Pudl{\'a}k, P.:
\newblock Quantum deduction rules.
\newblock Ann. Pure Appl. Logic \textbf{157}(1) (2009)  16--29 See also ECCC
  TR07-032.

\bibitem{CR79}
Cook, S.A., Reckhow, R.A.:
\newblock The relative efficiency of propositional proof systems.
\newblock Journal of Symbolic Logic \textbf{44}(1) (1979)  36--50

\bibitem{BP01}
Beame, P., Pitassi, T.:
\newblock Propositional proof complexity: Past, present, and future.
\newblock In: Current Trends in Theoretical Computer Science.
\newblock World Scientific (2001)  42--70

\bibitem{Seg07}
Segerlind, N.:
\newblock The complexity of propositional proofs.
\newblock Bulletin of Symbolic Logic \textbf{13}(4) (2007)  417--481

\bibitem{FSS84}
Furst, M.L., Saxe, J.B., Sipser, M.:
\newblock Parity, circuits, and the polynomial-time hierarchy.
\newblock Mathematical Systems Theory \textbf{17}(1) (1984)  13--27

\bibitem{Has86}
H{\aa}stad, J.:
\newblock Almost optimal lower bounds for small depth circuits.
\newblock In: Proceedings of the 18th Annual ACM Symposium on Theory of
  Computing STOC. (1986)  6--20

\bibitem{BDKMSSTV13}
Beyersdorff, O., Datta, S., Krebs, A., Mahajan, M., Scharfenberger-Fabian, G.,
  Sreenivasaiah, K., Thomas, M., Vollmer, H.:
\newblock Verifying proofs in constant depth.
\newblock ACM Trans. Comput. Theory \textbf{5}(1) (May 2013)  2:1--2:23 See
  also ECCC TR012-79. A preliminary version appeared in \cite{BDMSSTV11}.

\bibitem{CM01}
Cryan, M., Miltersen, P.B.:
\newblock On pseudorandom generators in {NC$^0$}.
\newblock In: Proceedings of 26th Symposium on Mathematical Foundations of
  Computer Science. (2001)  272--284

\bibitem{Bar89}
Barrington, D.:
\newblock Bounded-width polynomial size branching programs recognize exactly
  those languages in {NC$^1$}.
\newblock Journal of Computer and System Sciences \textbf{38} (1989)  150--164

\bibitem{BT88}
Barrington, D., Th\'erien, D.:
\newblock Finite monoids and the fine structure of {NC}$^1$.
\newblock Journal of the Association of Computing Machinery \textbf{35} (1988)
  941--952

\bibitem{Rei2008}
Reingold, O.:
\newblock Undirected connectivity in log-space.
\newblock J. ACM \textbf{55}(4) (2008) (Originally appeared in STOC '05).

\bibitem{Veb1912}
Veblen, O.:
\newblock An application of modular equations in analysis situs.
\newblock Annals of Mathematics \textbf{14}(1/4) (1912)  pp. 86--94

\bibitem{Jus72}
Justesen, J.:
\newblock Class of constructive asymptotically good algebraic codes.
\newblock Information Theory, IEEE Transactions on \textbf{18}(5) (1972)
  652--656

\bibitem{BLM93}
Bedard, F., Lemieux, F., McKenzie, P.:
\newblock Extensions to {Barrington}'s {M}-program model.
\newblock Theoretical Computer Science \textbf{107} (1993)  31--61

\bibitem{KL12}
Krebs, A., Limaye, N.:
\newblock Dlogtime-proof systems.
\newblock Electronic Colloquium on Computational Complexity (ECCC) \textbf{19}
  (2012)  186

\end{thebibliography}
\newpage
\end{document}